\numberwithin{equation}{section}
\numberwithin{theorem}{section}
\numberwithin{definition}{section}
\numberwithin{corollary}{section}
\numberwithin{lemma}{section}
\numberwithin{remark}{section}
\theoremstyle{definition}
\renewcommand{\epsilon}{\varepsilon}
\def\rd{{\rm d}}
\def\va{{\bf a}}
\def\vb{{\bf b}}
\def\vx{{\bf x}}
\def\mA{{\bf A}}
\def\mB{{\bf B}}
\def\mD{{\bf D}}
\def\vX{{\bf X}}
\def\MM{{\bf M}}
\def\vg{{\bf g}}
\def\vy{{\bf y}}
\def\vz{{\bf z}}
\def\vZ{{\bf Z}}
\def\vgamma{{\boldsymbol \gamma}}
\def\mmu{{\boldsymbol \mu}}
\def\mSigma{{\mathbf \Sigma}}
\def\mXi{{\mathbf \Xi}}
\begin{document}

\title{Stochastic limit-cycle oscillations of a nonlinear system under random perturbations
\thanks{H.Q. is partially supported by the Olga Jung Wan Endowed Professorship.
}
}


\author{Yu-Chen Cheng \and Hong Qian}


\institute{Yu-Chen Cheng \at Department of Applied Mathematics, University of Washington, Seattle, WA 98195-3925 \\
           \email{yuchench@uw.edu}         
               \and
           Hong Qian \at
              Department of Applied Mathematics, University of Washington, Seattle, WA 98195-3925   \\ 
              \email{hqian@uw.edu} 
              }

\date{Received: date / Accepted: date}

\maketitle

\tableofcontents

\vskip 1cm

\begin{abstract}
Dynamical systems with $\epsilon$ small random perturbations appear in both continuous mechanical motions and discrete stochastic chemical kinetics. The present work provides a detailed analysis of the central limit theorem (CLT), with a time-inhomogeneous Gaussian process, near a deterministic limit cycle in $\mathbb{R}^n$. Based on the theory of random perturbations of dynamical systems and the WKB approximation respectively, results are developed in parallel from both standpoints of stochastic trajectories and transition probability density and their relations are elucidated.  We show rigorously the correspondence between the local Gaussian fluctuations and the curvature of the large deviation rate function near its infimum, connecting the CLT and the large deviation principle of diffusion processes. We study uniform asymptotic behavior of stochastic limit cycles through the interchange of limits of time $t\to\infty$ and $\epsilon\to 0$. Three further characterizations of stochastic limit cycle oscillators are obtained: (i) An approximation of the probability flux near the cycle; (ii) Two special features of the vector field for the cyclic motion; (iii) A local entropy balance equation along the cycle with clear physical meanings. Lastly and different from the standard treatment, the origin of the $\epsilon$ in the theory is justified by a novel scaling hypothesis via constructing a sequence of stochastic differential equations.

\keywords{Stochastic limit cycles \and  Central limit theorem \and  Large deviation principle  \and  Random perturbations of dynamical systems \and  WKB approximation \and  Entropy balance \and Scaling hypothesis}
\subclass{37H05 \and 60G07 \and 60G15 \and 	82C31}
\end{abstract}

\section{Introduction} \label{ch1}

{\em Newtonian mechanics} represents the world in terms of featureless point masses with their positions and momenta. In contrast, classical {\em chemical kinetics} represents the world in terms of the number densities of interacting populations of individual molecules, each with a large internal degrees of freedom, as chemical species.  What is possibly an appropriate representation for complex biological systems and processes?  To answer this question, it is necessary to to give a more precise meaning to the too widely used term  ``complex" \cite{hopfield1994physics}. Let us consider one class of complex systems, the living biological cells in terms of a biochemical kinetic description. In this paradigm, a {\em complex system} consists of many interacting sub-populations of individuals with stochastic state transitions; the system as a whole actively exchanges matters, energy, or information with its environment  \cite{van2019complexity}.  One sees a remarkable resemblance between this kinetic description of
cells and many other biological systems with complex ``individuals''.  In fact, the  biochemist's perspective captures a repeated hierarchical structure of the complex world: An ecological system is a community of various biological organism; a human body consists of over 30 trillion cells; and a cell involves a large number of interacting non-living biopolymers. This view echos the philosophy of P. W. Anderson's hierarchical structure of science \cite{anderson1972more}.

While the ``stochasticity'' in chemical kinetics mainly originates from internal states of individuals, uncertainties in mechanical motions in biology, such as protein motor proteins in axonal transport and hemodynamics of cardiovascular systems, are chiefly a consequence of {\em coarse graining}: A highly complex many-body systems can be represented by simple statistical laws.  One of the best examples of this is Kramers' rate theory for barrier crossing between two basins, which condenses a very complex dynamics into a simple exponentially distributed time with a single parameter. A problem becomes simple if we focus on the emergent behavior of an assembly of a large numbers of atoms at the macroscopic scale with a much longer time scale.  Indeed, experimentalists would  find that the macromolecular movement obeys simple laws under certain approximations, for example Fick's law. The bridge between complexity and simplicity is {\em uncertainty and its statistics}. This is the fundamental idea of the theory of Brownian motion
\cite{berg1993random}.  

\subsection{Stochastic models of complex systems} \label{ch1-1}
As can be seen from the above discussion, both representations have their own values for complex systems. Once we choose one of them to describe a system of interest, then the following question is what mathematical model we should adopt.
In stochastic chemical kinetics, there is a success of the well-established scaling hypothesis in the continuous-time non-negative integer valued Markov jump process $\mathbf{n}_V(t)$  \cite{kurtz1972relationship}. Consider a continuous stirred
chemical reaction vessel of volume $V$, in
which the number of molecules $\mathbf{n}_V(t)$
is a Markov jump process that 
can be described by a {\em master equation}
\begin{align} \label{master-equation}
    \frac{\partial P(\mathbf{n}_V, t)}{\partial t} = \sum_{\mathbf{r}} \left[W(\mathbf{n}_V - \mathbf{r}, \mathbf{r})P(\mathbf{n}_V -\mathbf{r}, t) -  W(\mathbf{n}_V , \mathbf{r})P(\mathbf{n}_V, t)  \right], 
\end{align}
where $W(\mathbf{n}_V , \mathbf{r})$ is the transition probability per unit time from $\mathbf{n}_V, \mathbf{n}_V + \mathbf{r}$, and both $\mathbf{n}_V$ and $\mathbf{r}$ are $q$-dimensional vectors. As the system's size $V \rightarrow \infty$,
$\mathbf{n}_V(t)$ follows the law of large number, $V^{-1}\mathbf{n}_V(t) \rightarrow \mathbf{c}(t)$, the concentration of $q$ species. 

With a proper scaling by the size $V$ and the assumption that in the limit $W$ and $P$ are smooth enough functions, we can take the Kramers-Moyal expansion of the master equation \eqref{master-equation} \cite{kramers1940brownian,moyal1949stochastic,gang1987stationary}
\begin{align} \label{KM-expan}
    \epsilon \frac{\partial p(\vx, t)}{\partial t} = \sum_{\mathbf{k}} \left(\frac{1}{\mathbf{k} !}\right) \left(\epsilon \frac{\partial}{\partial\vx}\right)^\mathbf{k} \left[\alpha_\mathbf{k}(\vx)p(\vx,t)\right],
\end{align}
where $\epsilon = 1/V$, $\vx = \mathbf{n}_V/ V$, $p(\vx, t) = VP(\mathbf{n}_V ,t)$, and $\mathbf{k} = (k_1, k_2, \cdots, k_q)$, $\sum_{\mathbf{k}} = \sum_{k_q} \cdots \sum_{k_2}\sum_{k_1}$, $\mathbf{k} ! = \prod_i (k_i !)$, and $\alpha_\mathbf{k}(\vx) = \sum_{\mathbf{r}}(\prod_i r_i^{k_i} )w(\vx, \mathbf{r}) $, $w(\vx, \mathbf{r})=W(\vX, \mathbf{r})/ V$. The solutions of the differential equation \eqref{KM-expan} with the infinite terms represents the exact time-dependent probability density of the scaled number of population $ \mathbf{n}_V/ V$. Then a natural question arises: could we obtain a corresponding diffusion process from this infinite order differential equation? The truth is that we can only get a ``local diffusion process approximation" for the scaled Markov jump process due to the following reason.

A common method to attack Eq. \eqref{KM-expan} is by truncating the higher order terms to the second order of $\epsilon$ to obtain a  Fokker-Planck equation (FPE). However, van Kampen \cite{van1992stochastic} pointed out that this method may fail if $\mathbf{n}_V(t)$ has large sizes of jumps. A concrete example was provided in the work \cite{vellela2009stochastic}: There exists an inconsistency between the stationary solutions of Kramers-Moyal FPE and the original master equation for the Schl\"ogl's model of a chemical reaction system which has bistable steady states.  The reason for the failure of Kramers-Moyal FPE is that we are only able to observe either the deterministic behavior of the process at the scale of the law of large number or the Gaussian fluctuations at the scale of the central limit theorem; however, there is no one scale to obtain both. To keep the first two order terms of the Kramers-Moyal expansion simultaneously to represent a diffusion process at a single scale is incorrect. Therefore, van Kampen \cite{van1992stochastic} suggested the $\Omega$ expansion which allows us to get a deterministic trajectory as $\epsilon \rightarrow 0$ and a local approximation near the deterministic trajectory at the scale $O(\sqrt{\epsilon})$ separately.

In the present work, we focus on the continuous representation of complex systems. We always start with random perturbations of dynamical systems represented by a sequence of stochastic differential equations (SDEs) parameterized by a small parameter $\epsilon$
\begin{align}  \label{macro.scale3}
    \rd \vX_\epsilon(t) = \vb(\vX_\epsilon) \rd t + [2\epsilon\mathbf{D}]^{\frac{1}{2}} \rd \mB(t),
\end{align}
where $\vX_\epsilon \in \mathbb{R}^n$, $\vb: \mathbb{R}^n \rightarrow \mathbb{R}^n$ stands for a drift function, the  $\mathbb{R}^n \times \mathbb{R}^n$ diffusion matrix $\mD$ is constant and positive semidefinite symmetric, and $\mB$ is the standard $n$-dimensional Brownian motion. This Langevin type equation is widely applicable for complex systems related to mechanics, and it gives us a clear picture of the entire dynamics including the drift and diffusion at one scale. Furthermore, by the rigorous mathematical theory of semigroup \cite{feller1954general}, every diffusion process represented by a SDE has a unique FPE to characterize the corresponding transition probability density $p_{\epsilon}(\vx, t)$
\begin{align}
\frac{\partial p_{\epsilon}}{\partial t} &= - \nabla \cdot \mathbf{J} [p_{\epsilon} ], \quad \mathbf{J}[p_{\epsilon}] \equiv \vb(\vx)p_{\epsilon} - \epsilon\mathbf{D}(\vx) \nabla p_{\epsilon}.
\label{seq-PDE1}
\end{align}
This line of reasoning to relate diffusion processes and FPEs has no ambiguity unlike the Kramers-Moyal FPEs.

\subsection{Random perturbations of diffusion processes} \label{ch1-2}

Our analysis of random perturbations of diffusion processes is by expansion in powers of $\epsilon$ for the sequence of SDEs \eqref{macro.scale3}, which follows the work of Freidlin and Wentzell  \cite{freidlin1998random}. As $\epsilon \rightarrow 0$, the sequence of SDEs converges to an ordinary differential equation (ODE) of the emergent deterministic trajectory by the Law of large number (LLN). To shift the sequence of SDEs to its deterministic trajectory with normalization by the scale $O(\sqrt{\epsilon})$, the rescaled sequence of SDEs converges to a time-inhomogeneous Gaussian process by the central limit theorem (CLT). For rare events in $O(1)$, they have the probability asymptotic to zero exponentially fast by the Large deviation principe (LDP). In comparison to Freidlin and Wentzell, there is another celebrated theory for the LDP by Donsker and Varadhan  \cite{donsker1975asymptotic1,donsker1975asymptotic2,donsker1976asymptotic3,donsker1983asymptotic4}. The main difference between them is that the Freidlin-Wentzell theory is about large deviations from a deterministic trajectory by small noise but the Donsker-Varadhan theory is regarding large deviations of certain process expectations for large time with the ergodic theorem.

In the present paper, we provide a trajectory-based proof for an emergent time-inhomogeneous Gaussian process in $\mathbb{R}^n$ near a deterministic trajectory under the CLT, which follows the proof for the particular case of $\mathbb{R}^1$ in the Freidlin-Wentzell's textbook \cite{freidlin1998random} (The idea of proof for $\mathbb{R}^n$  was suggested in the book but without details) and we further obtain a {\em Lyapunov differential equation} for the covariance of this Gaussian process. To the best of our knowledge, our derivation by the CLT for a small-noise process with respect to its deterministic trajectory is not included in the Donsker-Varadhan theory since the existence of deterministic trajectories is not required in their theory. In the field of statistical physics, this Lyapunov differential equation was mentioned in the works \cite{keizer2012statistical,van1992stochastic,gardiner2009stochastic}. However, all of those previous  works were based on the small noise expansion of the associated FPEs and each approach has some limitations: In \cite{van1992stochastic,gardiner2009stochastic}, the dynamics was restricted to one dimension; In \cite{keizer2012statistical}, the dynamics was for elementary processes in chemical reactions. Our approach for the Lyapunov differential equation is trajectory-based without transferring the original SDE problem to the problem of perturbations of partial differential equations (PDEs) and it is applicable for rather general multi-dimensional diffusion processes.

In contradistinction to the Freidlin-Wentzell theory and the Donsker-Varadhan theory, which are both from the standpoint of trajectories of systems, there is another approach of the LDP based on the PDEs: A logarithmic transformation to the differential
generator of diffusion processes was proposed by Fleming in 1978 \cite{fleming1977exit} then the PDE-based approach was applied to the LDP through solving the Hamilton-Jacobian equations (HJEs) by Evans and Ishii \cite{evans1985pde} and others. Feng and Kurtz \cite{feng2006large} generalized this approach by refining techniques on the viscosity solutions of HJEs so that the scope of applications of it is compatible with the Freidlin-Wentzell theory and the Donsker-Varadhan theory. This rigorous mathematical PDE-based approach is corresponding to the WKB method of solving FPEs, which was introduced early by theoretical physicists \cite{kubo1973fluctuation,graham1984weak}. In the present work, our analysis of stochastic limit cycles 
is carried out in parallel with both the small random perturbations of SDEs and the WKB approximation of the corresponding transition probability density, which can regarded as an example of a link between the trajectory-based and the PDE-based methods.  The
contradistinction provides a more comprehensive 
portrayal of the stochastic limit cycle.

\subsection{Time-inhomogeneous Gaussian processes from a transient state to an invariant set}

By relating those two methods, one of the important results obtained in this paper is the correspondence between the local Gaussian fluctuations along a deterministic path, limit cyle or not, and the curvarture of the leading order term in the WKB method near its infimum. 
In the early works, the connection between the CLT and the LDP of random processes can be found in the analysis of action functional for Gaussian random processes  \cite{freidlin1998random} and the LDP for the empirical measures of centered stationary Gaussian processes \cite{donsker1985large,bryc1997large}, in which the former follows the Friedlin-Wentzell theory and the later follows the Donsker-Varadhan theory. In the present paper, our work on the analysis of the CLT and the LDP of nonlinear systems with stochastic limit cycles, from a transient state to infinite time limit on an invariant set, are beyond those theories.

Globally, from the standpoint of probability, the existence of stationary distribution in the whole space for a stochastic stable limit cycle has been proved by Holland \cite{holland1978stochastically}. With the WKB method, characterizations of the stationary large deviation rate function near the cycle were studied in the previous work \cite{dykman1993stationary,vance1996fluctuations,ge2012landscapes,li2014landscape,lin2019quasi}. Locally, from the standpoint of trajectories, the dynamics is attracted to an invariant set but still capable of escaping from the set due to the multi-dimensional  fluctuations except the part tangential to the cycle. In the long run, the Gaussian fluctuations along the direction tangential to the cycle is eventually smeared out and the rest of fluctuations in the hyperplane perpendicular to the cycle are outward and damped out by the dissipation toward the limit cycle \cite{kurrer1991effect}.

In this paper, equipped with the Lyapunov equation for the covariance of the time-inhomogeneous Gaussian process, we characterize the fluctuations along the limit cycle by {\em asymptotic analysis}. Via a careful study of the interchange of limits of time $t \rightarrow \infty $ and $\epsilon \rightarrow 0$, with a coordinate transformation and dimension reduction on the cycle, we show that the Lyapunov equation becomes a $n-1 \times n-1$ {\em periodic Riccati differential equation} \cite{bittanti1984stability,pastor1993differential,chen2000existence,zhou2011periodic} and the solution of equation is a positive definite matrix. We further characterize the curvature of large deviation rate function on the limit cycle by the correspondence between the covariance matrix and the curvature established in our theory. 

The importance of stochastic limit cycle oscillations in physics was emphasized by Keizer \cite{keizer2012statistical} and van Kampen \cite{van1992stochastic}. In their books, specific examples with careful studies were provided but a general analysis was missing. Our analysis by both the trajectory-based and probability-based methods, from a transient state to an invariant set, helps us to paint a clear picture of dynamics in different scopes of space and time. Additionally, the present work can be regarded as an extension of the linear approximation theory of a stochastic nonlinear system with a fixed point as the steady state, which can be found in \cite{qian2001mathematical,kwon2005structure,qian2011nonlinear}, to an invariant set in $\mathbb{R}^n$.

\subsection{Organization of the paper}

In Sec. \ref{ch2}, we start with a rather general small-noise diffusion process represented by a sequence of non-linear multidimentional SDEs. Based on both the trajectory-based approach and the WKB method, key lemmas regarding the time-inhomogeneous Gaussain processes and a link to the large deviation rate function are provided.  In Sec. \ref{ch3}, we apply the lemmas to stochastic limit cycle oscillators. This approach is distinct to the previous works \cite{dykman1993stationary,vance1996fluctuations,ge2012landscapes}: (i) The works \cite{dykman1993stationary,vance1996fluctuations} are regarding fluctuations of limit cycles in chemical systems (The former \cite{dykman1993stationary} focused on analysis of a stationary FPE and the later \cite{vance1996fluctuations} applied the WKB method directly to a master equation.); (ii) The work \cite{ge2012landscapes} focused on the case of one-dimensional motion on a circle. In Sec. \ref{ch4}, we introduce the scaling hypothesis of diffusion processes to construct a sequence of dynamics parameterized by $\epsilon$. This scaling hypothesis not only serves as an useful mathematical tool of asymptotic analysis but also a scientific theory to justify the origin of $\epsilon$.

\section{Preliminaries} \label{ch2}

As we mentioned in Sec. \ref{ch1}, both discrete chemical kinetics and continuous mechanical motions successfully depict complex systems via introducing uncertainty. Based on the probability theory, the former is conventionally characterized by Markov jump processes (continuous-time and discrete-state) with the corresponding transition probability captured by  master equations, and the later is popularly described by diffusion processes (continuous-time and continuous-state) with SDEs. By introducing a parameter of the size of systems, at proper scales, both representations have their corresponding FPEs of the transition probability density and the Hamilton-Jacobi equations  (HJEs) of the large deviation rate function. In the present work, based on the continuous representation, we follow the direction SDE - FPE - HJE in a sequence. 




\subsection{Expansion in powers of a small parameter for diffusion processes}
\label{ch2-2}
Let us start from a sequence of SDEs defined in Eq. \eqref{macro.scale3}
\begin{align}  \label{macro.scale4}
    \rd \vX_\epsilon(t) = \vb(\vX_\epsilon) \rd t + [2\epsilon\mathbf{D}]^{\frac{1}{2}} \rd \mB(t),
\end{align}
and by the LLN, it converges to the following ordinary differential equation (ODE) as $\epsilon \rightarrow 0$
\begin{align} \label{det-dyn}
\rd \mathbf{x}(t) = \vb(\mathbf{x}) \rd t. 
\end{align}
Let $\hat{\mathbf{x}}(t)$ be the solution of this ODE with a given initial condition $\hat{\mathbf{x}}(0) = \hat{\mathbf{x}}_0$. 

We shall note that a direct application of small noise expansions for the process \eqref{macro.scale4} by $\vX_\epsilon(t) = \sum_{i=0}^n \epsilon^i \vX_i(t)$ may fail for certain types of drift functions $\vb$ \cite{gardiner2009stochastic}. Therefore, we need to expand $\vX_\epsilon(t)$ with a proper scale: By the scale of the CLT, we can define a random process $\vZ_{\epsilon}(t)$ near the deterministic trajectory  $\hat{\mathbf{x}}(t)$ 
\begin{align}
    \vZ_\epsilon(t) \equiv {\frac{\vX_{\epsilon}(t) - \hat{\mathbf{x}}(t)}{\sqrt{\epsilon}}}  
\label{cv}
\end{align}
and substitute Eq. (\ref{cv}) into Eq.  \eqref{macro.scale4}, we can derive that
\begin{equation}
\begin{alignedat}{2}
&&  \rd \vX_{\epsilon}(t) =  \rd \hat{\mathbf{x}}(t) + \sqrt{\epsilon} \rd \vZ_{\epsilon}(t) &=  \vb(\vX_{\epsilon}) \rd t + [2\epsilon\mathbf{D}]^{\frac{1}{2}} \rd \mB(t) + O(\epsilon)\\
\Rightarrow  && \quad \rd \vX_{\epsilon}(t) =  \rd \hat{\mathbf{x}}(t) + \sqrt{\epsilon} \rd \vZ_{\epsilon}(t) &=  \left( \vb(\hat{\vx}) + \sqrt{\epsilon} \mA(\hat{\vx})\vZ_\epsilon \right) \rd t + [2\epsilon\mathbf{D}]^{\frac{1}{2}} \rd \mB(t) + O(\epsilon) \\
\Rightarrow && \rd \vZ_{\epsilon}(t) &=   \mA(\hat{\mathbf{x}})  \vZ_{\epsilon} \rd t +  [2\mathbf{D}]^{\frac{1}{2}} \rd \mB(t) + O(\sqrt{\epsilon}) \label{cv-lin},
\end{alignedat}
\end{equation}
where $\mA(\hat{\vx}(t))$ is the Jacobian matrix of $\vb(\vx)$ evaluated at $\vx =\hat{
\vx}(t)$. We then follow the usual approach \cite{freidlin1998random} of perturbation theory to obtain an expansion in powers of the small parameter $\sqrt{\epsilon}$
\begin{align} \label{z-expansion}
\vZ_{\epsilon}(t) = \vZ(t) + \sqrt{\epsilon} \vZ^{(1)}(t) + \dotsi + {\sqrt{\epsilon}}^k \vZ^{(k)}(t) + \dotsi.
\end{align}
Apply the expansion of  $\vZ_\epsilon(t)$ in Eq. \eqref{z-expansion} to its SDE in Eq. \eqref{cv-lin}, we can obtain a SDE for the zeroth approximation of $\vZ_\epsilon(t)$
\begin{align}
\rd \vZ(t) &= \mA(\hat{\vx}) \vZ \rd t +  [2\mathbf{D}]^{\frac{1}{2}} \rd \mB(t).
\label{SDE-per}
\end{align}

The following lemma is about the solution of $\vZ(t)$:
\begin{lemma} \label{thm:gaussian.z}
If each element of the Jacobian matrix $\mA(\hat{\vx}(t))$ is continuous for all $t \geq 0$, then for every $t > 0$, $\vZ(t)$ is a Gaussian random variable $\vZ(t) \sim \mathcal{N}(\boldsymbol{\mu}(t), \mSigma(t)  )$  with
\begin{align}
\frac{ \rd \mmu(t)}{\rd t} &= \mA(\hat{\vx}) \mmu, \quad \quad \mmu(0) =  \hat{\mmu}_0, \label{dyn-cov0} \\
\frac{ \rd \mSigma(t)}{\rd t} &= \mA(\hat{\vx}) \mSigma + \mSigma   \mA(\hat{\vx})^T +  2\mD, \quad \quad \mSigma(0) = \hat{\mSigma}_0,
\label{dyn-cov1}
\end{align}
where $\mmu_0$ and $\mSigma_0 $ are given initial conditions. 
\end{lemma}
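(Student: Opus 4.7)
The plan is to exploit the fact that the SDE in \eqref{SDE-per} is linear with deterministic, time-dependent coefficients and additive noise, so its solution admits an explicit representation in terms of the deterministic fundamental matrix of the associated linear ODE. First, the continuity of $\mA(\hat{\vx}(t))$ ensures that the matrix-valued initial value problem
\begin{equation*}
\frac{\rd \Phi(t,s)}{\rd t} = \mA(\hat{\vx}(t))\,\Phi(t,s), \qquad \Phi(s,s) = \mI,
\end{equation*}
has a unique, continuously differentiable solution for all $t \geq s \geq 0$, yielding an invertible two-parameter transition matrix $\Phi(t,s)$.

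Next, I would derive the stochastic variation-of-constants formula by applying It\^o's formula to $\Phi(t,0)^{-1}\vZ(t)$ and integrating:
\begin{equation*}
\vZ(t) \;=\; \Phi(t,0)\,\vZ(0) \;+\; \int_0^t \Phi(t,s)\,[2\mD]^{1/2}\,\rd \mB(s).
\end{equation*}
Because the integrand $\Phi(t,s)[2\mD]^{1/2}$ is deterministic and bounded on $[0,t]$, the Wiener integral is a Gaussian random vector; since $\vZ(0) \sim \mathcal{N}(\hat{\mmu}_0,\hat{\mSigma}_0)$ is independent of $\mB$, the sum is Gaussian for every $t>0$, which establishes the distributional claim.

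To obtain the mean equation, I would take expectations: the Wiener integral has mean zero, so $\mmu(t) = \Phi(t,0)\hat{\mmu}_0$, and differentiation together with the defining ODE of $\Phi$ immediately gives \eqref{dyn-cov0}. For the covariance, independence and the It\^o isometry yield
\begin{equation*}
\mSigma(t) \;=\; \Phi(t,0)\,\hat{\mSigma}_0\,\Phi(t,0)^T \;+\; \int_0^t \Phi(t,s)\,(2\mD)\,\Phi(t,s)^T\,\rd s,
\end{equation*}
and differentiating in $t$ by Leibniz's rule (the boundary term at $s=t$ contributing $2\mD$, and the $t$-derivatives inside the integrand contributing $\mA(\hat{\vx}(t))$ on the left and $\mA(\hat{\vx}(t))^T$ on the right) recombines into the Lyapunov equation \eqref{dyn-cov1} with initial condition $\mSigma(0)=\hat{\mSigma}_0$.

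The main obstacle is essentially bookkeeping rather than any deep analytic difficulty: one must carefully track which $\Phi$ factors carry the $t$-derivative and which come from the boundary of the stochastic integral. A cleaner alternative that I would actually adopt for the write-up is to bypass the fundamental matrix in the covariance step: apply It\^o's formula directly to the entries $Z_i(t)Z_j(t)$ and take expectations, producing a closed linear system of ODEs for the entries of $\mSigma(t)$ whose matrix form is exactly \eqref{dyn-cov1}. Gaussianity itself still comes from the variation-of-constants representation above, so the two pieces combine into a short proof.
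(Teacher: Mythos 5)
Your proposal is correct and follows essentially the same route as the paper's proof: a fundamental-matrix (variation-of-constants) representation of the linear SDE \eqref{SDE-per}, Gaussianity from the Wiener integral of a deterministic integrand, and the mean/covariance ODEs obtained by differentiating the explicit moment formulas via the It\^o isometry. The only substantive difference is that you treat the initial condition as an independent Gaussian vector with covariance $\hat{\mSigma}_0$ (which cleanly accounts for the stated initial condition $\mSigma(0)=\hat{\mSigma}_0$, a point the paper glosses over by taking $\vZ_0$ deterministic), and your suggested shortcut of applying It\^o's formula to $Z_iZ_j$ is a standard equivalent way to get \eqref{dyn-cov1}.
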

\begin{proof}
Under the assumption that each element of $\mA(\hat{\vx}(t))$ is continuous for all $t \geq 0$, there exits a fundamental matrix $\MM(t) \in \mathbb{R}^n \times \mathbb{R}^n$ satisfied the linear homogenous ordinary differential equation
\begin{align} \label{eqn:homogenous}
\rd\MM(t) = \mA(\hat{\vx}) \MM \rd t \quad \text{for all} \ t >0. 
\end{align}
Let $\vZ_0$ be the given initial condition for the dynamics \eqref{SDE-per}, we can verify the equation
\begin{align} \label{eqn:z}
    \vZ(t) = \MM(t) \left( \vZ_0 + \int_0^t \MM^{-1}(s)[2\mathbf{D}]^{\frac{1}{2}} \rd \mB(s)  \right) \quad \text{for all} \ t >0,
\end{align}
by differentiating the both sides of it with the It\^o lemma and Eq. \eqref{SDE-per} and Eq. \eqref{eqn:homogenous} as follows
\begin{align} 
     & \rd \left[ \MM(t) \left( \vZ_0 + \int_0^t \MM^{-1}(s)[2\mathbf{D}]^{\frac{1}{2}} \rd \mB(s)  \right)  \right] \notag \\
     &= \MM \rd \left( \vZ_0 + \int_0^t \MM^{-1}[2\mathbf{D}]^{\frac{1}{2}} \rd \mB  \right) + \rd \MM \left( \vZ_0 + \int_0^t \MM^{-1}[2\mathbf{D}]^{\frac{1}{2}} \rd \mB  \right) + \rd \MM \rd \left( \vZ_0 + \int_0^t \MM^{-1}[2\mathbf{D}]^{\frac{1}{2}} \rd \mB  \right) \notag \\
     &= [2\mathbf{D}]^{\frac{1}{2}} \rd \mB + \mA(\hat{\vx}) \MM \rd t  \MM^{-1} \vZ + \mA(\hat{\vx}) \MM \rd t \MM^{-1}  [2\mathbf{D}]^{\frac{1}{2}} \rd \mB \notag \\
    &= [2\mathbf{D}]^{\frac{1}{2}} \rd \mB + \mA(\hat{\vx}) \MM \MM^{-1} \vZ \rd t    \notag \\ 
    &= \rd \vZ(t). \notag
\end{align}

By Eq. \eqref{eqn:z}, for any constant vector $\va \in \mathbb{R}^n$, we have that
\begin{align}
    \va^T \vZ(t) &= \va^T \MM(t) \left( \vZ_0 + \int_0^t \MM^{-1}(s)[2\mathbf{D}]^{\frac{1}{2}} \rd \mB(s)  \right) = \sum_{i=1}^n \int_0^t f_i(s) \rd B_i(s),
\end{align}
where $B_i \in \mathbb{R}^1$ is a collection independent and identically distributed random variables from the standard Brownian motion $\mB = (B_1, B_2, \cdots, B_n)$, and $f_i:\mathbb{R}^1 \rightarrow \mathbb{R}^1$ is a collection of deterministic functions. Therefore,  $ \va^T \vZ(t) $ has to be a one-dimensional Gaussian random variable since it is a linear combination of a collection of independent one-dimensional  Gaussian random variables. Furthermore, by \cite{tong2012multivariate}, using the moment generating functions, arbitrary linear combinations of the random vector $\vZ(t)$ being an univariate Gaussian random variable implies that $\vZ(t)$ is a multivariate Gaussian random variable.

Next, we want to find expressions of the mean and the covariance of $\vZ(t)$ for every $t$. By the property of the standard Brownian motion, given a matrix $\mathbf{F}(t)$ which is independent of $\mB$, we have that
\begin{align} \label{prop.Brown}
   \mathbb{E}\left[\int_0^t \mathbf{F}(s) \rd \mB(s) \right] = \mathbf{0}, \quad \text{for all} \ t \geq 0.
\end{align}
With \eqref{eqn:z} and \eqref{prop.Brown}, the first and second moment of $\vZ(t)$ should satisfy
\begin{align} \label{eqn:gauss.mean.variance}
    \mathbb{E}[\vZ(t)] &= \MM(t) \vZ_0 \notag \\ 
    \mathbb{E}[\vZ(t) \vZ(t)^T] &= \mathbb{E}\left[\MM(t) \left( \vZ_0 + \int_0^t \MM^{-1}[2\mathbf{D}]^{\frac{1}{2}} \rd \mB  \right)  \left( \vZ_0 + \int_0^t \MM^{-1}[2\mathbf{D}]^{\frac{1}{2}} \rd \mB  \right)^T \MM(t)^T \right] \notag \\
    &= \MM(t)\vZ_0 \vZ_0^T \MM(t)^T + \MM(t)\mathbb{E}\left[\left( \int_0^t \MM^{-1}[2\mathbf{D}]^{\frac{1}{2}} \rd \mB  \right)  \left( \int_0^t \MM^{-1}[2\mathbf{D}]^{\frac{1}{2}} \rd \mB  \right)^T  \right] \MM(t)^T \notag \\
    &= \MM(t)\vZ_0 \vZ_0^T \MM(t)^T  + 2\MM(t) \left( \int_0^t \MM^{-1}(s)\mathbf{D} \MM^{-T}(s) \rd s \right) \MM(t)^T, 
\end{align}
where we applied the It\^o isometry to the last equation. Since $\mmu(t) = \mathbb{E}[\vZ(t)] = \MM(t) \vZ_0$, and $\mSigma(t) =\mathbb{E}[\vZ(t) \vZ(t)^T] - \mathbb{E}[\vZ(t)] \mathbb{E}[ \vZ(t)^T] = 2\MM(t)\left( \int_0^t \MM^{-1}(s)\mathbf{D} \MM^{-T}(s) \rd s \right) \MM(t)^T $, by taking derivatives of them with respect to time, we thus obtain dynamics of $\mmu(t)$ and $\mSigma(t)$ as follows
\begin{align}
\frac{ \rd \mmu(t)}{\rd t} &= \mA(\hat{\vx}) \mmu, \quad \quad \mmu(0) =  \hat{\mmu}_0, \\
\frac{ \rd \mSigma(t)}{\rd t} &= \mA(\hat{\vx}) \mSigma + \mSigma   \mA(\hat{\vx})^T +  2\mD, \quad \quad \mSigma(0) = \hat{\mSigma}_0,
\label{dyn-cov2}
\end{align}
where $\hat{\mmu}_0$ and $\hat{\mSigma}_0 $ are given initial conditions.

\end{proof}

By Lemma \ref{thm:gaussian.z}, we obtained a time-inhomogeneous Gaussian process from a   multi-dimensional nonlinear diffusion process at the scale of the CLT with the covariance captured by  the Lyapunov differential equation \eqref{dyn-cov2}. Additionally, this lemma can be applied to solve the FPE
\begin{align} \label{seq-PDE1.5}
\frac{\partial p_{\epsilon}}{\partial t} &= - \nabla \cdot \mathbf{J} [p_{\epsilon} ], \quad \mathbf{J}[p_{\epsilon}] \equiv \vb(\vx)p_{\epsilon} - \epsilon\mathbf{D} \nabla p_{\epsilon}
\end{align}
with certain boundary conditions. Since the function $\vb$ is nonlinear and multi-dimensional, this type of PDE problems may not be easy to solve directly. Under certain conditions \cite{freidlin1998random,jiang2004mathematical}, the diffusion process \eqref{macro.scale4} is associated with this FPE. By expanding the solution of the FPE \eqref{seq-PDE1.5} and the diffusion process \eqref{macro.scale4} respectively
\begin{align}
    &p_\epsilon(\vx, t) = \frac{1}{\sqrt{\epsilon}} \hat{p}_\epsilon(\vz, t) \quad \text{and} \quad  \hat{p}_\epsilon(\mathbf{z},t) = \hat{p}_0(\mathbf{z}, t) + \sum_{n=1}^\infty (\sqrt{\epsilon})^n \hat{p}_n(\mathbf{z}, t) , \\
    & \vZ_\epsilon(t) = {\frac{\vX_{\epsilon}(t) - \hat{\mathbf{x}}(t)}{\sqrt{\epsilon}}} \quad \text{and} \quad \vZ_\epsilon(t) = \vZ(t) + \sum_{n=1}^{\infty} (\sqrt{\epsilon})^n\vZ^{(n)}(t) , \label{prob.v.s.trajectory}
\end{align}
we can check that $\hat{p}_0(\mathbf{z}, t)$ is the probability density of $\vZ(t)$. Following from Lemma \ref{thm:gaussian.z}, we thus obtain an approximation solution of the FPE by having the dynamics of the mean and covariance of $\vZ(t)$. To transform the FPE problem into the problem of a diffusion process, the above example is an application of Lemma \ref{thm:gaussian.z} to attack a complicated boundary value problem.  

\subsection{Approximations by the asymptotic theory akin to the WKB}
\label{ch2-3}

In Sec. \ref{ch1-1}, we pointed out that using the Kramers-Moyal Fokker-Planck equation for a mater equation may fail in some cases. Instead, the deterministic behavior and the local fluctuation of a jump Markov process can be obtained by the $\Omega$ expansion with respect to two different scales. In addition to the $\Omega$ expansion, another approach by the WKB approximation has been applied to give a full analysis of master equations \cite{kubo1973fluctuation,gang1987stationary,vance1996fluctuations,nakanishi2013hamilton}. In this method, the WKB ansatz is assumed for the solution of the Kramers-Moyal expansion of a master equation without truncating higher-order terms, so this method has no problem unlike the Kramers-Moyal Fokker-Planck equation. As the success of the WKB approximation of master equations for Markov jump processes, this method has also been used to the associated Fokker-Planck equations of diffusion processes \cite{graham1984weak,Qian-2017}. Here we want to link our trajectory-based approach in Sec. \ref{ch2-2} to the probability-based approach by the WKB approximation.

Recall that the path behaviors of the diffusion process is described by the n-dimensional SDE \eqref{macro.scale4}
\begin{align} \label{pert-sys}
\rd\vX_{\epsilon}(t) = {\mathbf{b}}(\vX_{\epsilon}) \rd t + \sqrt{2\epsilon \mD} \rd \mB_t.
\end{align}
In order to link the SDE \eqref{pert-sys} to the WKB ansatz, we need to find its probability-density representation by a FPE. From Eq. \eqref{seq-PDE1.5} to Eq. \eqref{prob.v.s.trajectory}, we have illustrated a way to convert a PDE problem to a SDE problem; on the other hand, by the semigroup approaches \cite{feller1954general,jiang2004mathematical}, we can also convert a SDE problem to a PDE problem. Under certain conditions \cite{jiang2004mathematical}, the original SDE problem can be characterized by the solution of the FPE
\begin{align}
\frac{\partial p_{\epsilon}}{\partial t} &= - \nabla \cdot \mathbf{J} [p_{\epsilon} ], \quad \mathbf{J}[p_{\epsilon}] \equiv \vb(\vx)p_{\epsilon} - \epsilon\mathbf{D} \nabla p_{\epsilon}.
\label{seq-PDE2}
\end{align}
We shall note that, as $\epsilon \rightarrow 0$, the FPE reduces to a first-order differential equation so the perturbation of the solution $p_\epsilon$ follows the singular perturbation theory. To attack this singular perturbation problem, we adopt an asymptotic series of $p_\epsilon$  with a proper scaled variable $\mathbf{z} =  (\vx - \hat{\vx}) / \sqrt{\epsilon} $,
\begin{align} \label{asymptotic1}
    p_\epsilon(\vx, t) = \frac{1}{\sqrt{\epsilon}} \hat{p}_\epsilon(\vz, t) \quad \text{and} \quad  \hat{p}_\epsilon(\mathbf{z},t) = \sum_{n=0}^\infty (\sqrt{\epsilon})^n\hat{p}_n(\mathbf{z}, t) .
\end{align}
In parallel, there is another complete asymptotic theory for the solution of FPE akin to the WKB ansatz \cite{kubo1973fluctuation,graham1984weak}
\begin{align} \label{asymptotic2}
    p_\epsilon(\vx, t)= a(\epsilon, t) \exp \left[ -\frac{1}{\epsilon} \sum_{n=0}^\infty \phi_n(\vx, t)\epsilon^n  \right], 
\end{align} 
where $ a(\epsilon, t)$ is a normalization factor. The expansion \eqref{asymptotic2} with the series $\epsilon^{-1}\phi_0 + \phi_1 + \epsilon\phi_2 + \cdots$ was justified by the {\em extensive property} of $p_\epsilon(\vx ,t)$, i.e. it keeps the form \eqref{asymptotic2} as time evolves \cite{kubo1973fluctuation}. 

We will connect those two types of expansions in Lemma \ref{thm:two-types}. To prove the lemma, we first give two useful lemmas on the asymptotic evaluation of various integrals \cite{bender2013advanced}. All the proofs can be found in Appendix \ref{proof:lemmas}. 

\begin{lemma} \label{lemma-2}
For sufficiently smooth scalar functions
$f(\vx)$ and $h(\vx)$, $\vx\in\mathbb{R}^n$,

\begin{equation}
       \int_{\mathbb{R}^n}
                     f(\vx) e^{-h(\vx)/\epsilon}\rd\vx
  = \sqrt{\frac{2\pi\epsilon}{\det[\nabla\nabla h(\vx^*)] }}  e^{-\frac{h(\vx^*)}{\epsilon}}
       \Big[ f(\vx^*) + \epsilon \eta(\vx^*) 
         + O\big(\epsilon^2\big) \Big] \label{eq-13-a} ,
\end{equation}
\begin{eqnarray}
       && \frac{\displaystyle \int_{\mathbb{R}^n}
                     f(\vx) e^{-h(\vx)/\epsilon}\rd\vx }{\displaystyle \int_{\mathbb{R}^n} e^{-h(\vx)/\epsilon}\rd\vx }  \label{eq-13-b}
\\
  &=&  f(\vx^*) + \epsilon \left[\frac{f''_{ij}(\vx^*)\Xi_{ij}}{2}-
     \frac{f'_i(\vx^*)h'''_{jk\ell}(\vx^*)\Xi^{\frac{1}{2}}_{i\mu} 
       \Xi^{\frac{1}{2}}_{j\nu}\Xi^{\frac{1}{2}}_{k\rho}
            \Xi^{\frac{1}{2}}_{\ell\kappa}\Theta_{\mu\nu\rho\kappa}}{6}  \right]  +
         O(\epsilon^2),
\nonumber
\end{eqnarray}
as $\epsilon\to 0$, in which Einstein's summation rule is 
adopted, $\vx^*$ is the global minimum 
of $h(\vx)$, and
\begin{eqnarray}
   \eta(\vx^*) &=&  \frac{f''_{ij}(\vx^*)\Xi_{ij}}{2}-\left[
     \frac{f'_i(\vx^*)h'''_{jk\ell}(\vx^*)}{6}+
            \frac{f(\vx^*)h''''_{ijk\ell}(\vx^*)}{24}\right]
    \Xi^{\frac{1}{2}}_{i\mu} 
       \Xi^{\frac{1}{2}}_{j\nu}\Xi^{\frac{1}{2}}_{k\rho}
            \Xi^{\frac{1}{2}}_{\ell\kappa}\Theta_{\mu\nu\rho\kappa}
\nonumber\\
   &+& \frac{f(\vx^*)[h'''_{ijk}(\vx^*)]^2}{72} \Xi^{-\frac{1}{2}}_{i\mu}
         \Xi^{-\frac{1}{2}}_{i\mu'}\Xi^{-\frac{1}{2}}_{j\nu}\Xi^{-\frac{1}{2}}_{j\nu'}\Xi^{-\frac{1}{2}}_{k\rho}\Xi^{-\frac{1}{2}}_{k\rho'} 
        \Lambda_{\mu\mu'\nu\nu'\rho\rho'}. \label{eq-13-c}
\end{eqnarray}
The covariance matrix $\mXi = \big[\nabla\nabla h(\vx^*)\big]^{-1}$, 
and the multi-indexed $\Theta_{ijk\ell}$ and
$\Lambda_{\mu\mu'\nu\nu'\rho\rho'}$ are
\begin{eqnarray}
 \Theta_{\mu\nu\rho\kappa} &=& \int_{\mathbb{R}^n}
  \frac{ y_{\mu}y_{\nu}y_{\rho}y_{\kappa}
           }{\big(2\pi\big)^{n/2} }
   \exp\left[-\frac{\vy^T\vy}{2}\right] \rd\vy, 
\label{Theta}\\
  \Lambda_{\mu\mu'\nu\nu'\rho\rho'} &=&
   \int_{\mathbb{R}^n}
  \frac{ y_{\mu}y_{\mu'}y_{\nu}y_{\nu'}y_{\rho}y_{\rho'}
           }{\big(2\pi\big)^{n/2} }
   \exp\left[-\frac{\vy^T\vy}{2}\right] \rd\vy.
\end{eqnarray}

\end{lemma}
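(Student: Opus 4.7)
The plan is to carry out the classical multidimensional Laplace method quantitatively to order $\epsilon$. Because $\vx^*$ is the global minimum of $h$, the integrand concentrates in an $O(\sqrt{\epsilon})$-neighborhood of $\vx^*$, so I first rescale by $\vy = (\vx-\vx^*)/\sqrt{\epsilon}$ (contributing $\epsilon^{n/2}$ from the Jacobian) and Taylor expand
\[
\tfrac{1}{\epsilon}h(\vx^*+\sqrt{\epsilon}\,\vy) = \tfrac{1}{\epsilon}h(\vx^*) + \tfrac{1}{2} h''_{ij}(\vx^*) y_i y_j + \tfrac{\sqrt{\epsilon}}{6} h'''_{ijk}(\vx^*) y_i y_j y_k + \tfrac{\epsilon}{24} h''''_{ijk\ell}(\vx^*) y_i y_j y_k y_\ell + O(\epsilon^{3/2}),
\]
using $\nabla h(\vx^*)=\mathbf{0}$, and analogously for $f(\vx^*+\sqrt{\epsilon}\,\vy)$. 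I would then factor out $e^{-h(\vx^*)/\epsilon}\, e^{-\tfrac{1}{2} h''_{ij}(\vx^*) y_i y_j}$ and expand the residual exponential in powers of $\sqrt{\epsilon}$, so that the integrand becomes a weighted Gaussian times an explicit polynomial in $\vy$ and $\sqrt{\epsilon}$.

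A second change of variable $\vy = \mXi^{1/2}\vy'$, with $\mXi=[\nabla\nabla h(\vx^*)]^{-1}$, reduces the quadratic weight to the isotropic $\tfrac{1}{2}(\vy')^T \vy'$ and contributes the Jacobian $\sqrt{\det\mXi}$; combined with $\epsilon^{n/2}$ this produces the prefactor $\sqrt{2\pi\epsilon/\det[\nabla\nabla h(\vx^*)]}$. The remaining integral is a polynomial in $\vy'$ against the standard Gaussian, whose moments are precisely the tensors $\Theta$ and $\Lambda$ (and lower ones). Odd moments vanish by symmetry, so the formal $O(\sqrt{\epsilon})$ correction is zero. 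At order $\epsilon$ exactly four contributions survive: $f''_{ij}$ paired with the second moment (reassembling $\mXi_{ij}$ after unwinding $\mXi^{1/2}$); the cross term $f'_i \cdot h'''_{jk\ell}$ paired with $\Theta$; the term $f \cdot h''''_{ijk\ell}$ paired with $\Theta$; and $f \cdot [h'''_{ijk}]^2$ paired with $\Lambda$ (the last arising from squaring the cubic contribution in the expansion of $\exp$). Reassembling these reproduces $\eta(\vx^*)$ in the form displayed in \eqref{eq-13-c} and establishes \eqref{eq-13-a}.

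For \eqref{eq-13-b}, I would apply \eqref{eq-13-a} to the numerator and separately with $f\equiv 1$ to the denominator, then expand the denominator geometrically, $[1 + \epsilon\,\eta(\vx^*)|_{f\equiv 1} + O(\epsilon^2)]^{-1} = 1 - \epsilon\,\eta(\vx^*)|_{f\equiv 1} + O(\epsilon^2)$. The common prefactors $\sqrt{2\pi\epsilon/\det[\nabla\nabla h(\vx^*)]}\, e^{-h(\vx^*)/\epsilon}$ cancel exactly, and the $f$-independent pieces of $\eta$ (those proportional to $f(\vx^*)h''''$ and $f(\vx^*)[h''']^2$) cancel between the order-$\epsilon$ piece of the numerator and $-\epsilon f(\vx^*)\,\eta(\vx^*)|_{f\equiv 1}$ from the expanded denominator, leaving precisely the $f''$ and $f'\cdot h'''$ terms in \eqref{eq-13-b}.

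The main obstacle is the combinatorial bookkeeping of Taylor terms and Gaussian-moment contractions; this is best systematized via Isserlis'/Wick's theorem, which expresses $\Theta$ and $\Lambda$ as sums over pairings of Kronecker deltas, together with the observation that each $\mXi^{1/2}\mXi^{1/2}$ pairing reconstitutes an $\mXi$. Making the formal expansion rigorous further requires tail control: outside an $O(\epsilon^\alpha)$-ball around $\vx^*$ (with $0<\alpha<1/2$) the quantity $h(\vx)-h(\vx^*)$ is bounded below by a positive constant, so the excluded contribution is exponentially small in $\epsilon$ and absorbed into the $O(\epsilon^2)$ remainder. The ``sufficiently smooth'' hypothesis on $f$ and $h$, together with mild integrability of their tails, is exactly what legitimizes this standard Watson-lemma-type estimate.
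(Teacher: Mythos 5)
Your proposal is correct and follows essentially the same route as the paper's proof: Taylor-expand $f$ and $h$ about the minimum, factor out the Gaussian weight, reduce to standard Gaussian moments via the $\mXi^{1/2}$ change of variables (yielding the $\Theta$ and $\Lambda$ tensors), and then obtain \eqref{eq-13-b} by applying \eqref{eq-13-a} to numerator and denominator and cancelling the $f$-independent order-$\epsilon$ terms. Your added remarks on Wick pairings and exponential tail control outside a shrinking ball are a welcome bit of extra rigor but do not change the argument.
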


By Lemma \ref{lemma-2}, we can obtain the following lemma, which is very useful in the integrals with respect to a probability density approximated by the WKB method.

\begin{lemma}  \label{lemma-3} For sufficiently smooth functions
$f(\vx)$, $g(\vx)$, and $h(\vx)$, $\vx\in\mathbb{R}^n$,
\begin{eqnarray}
  && \quad \frac{\displaystyle \int_{\mathbb{R}^n}
                     f(\vx)g(\vx)e^{-\frac{h(\vx)}{\epsilon}}\rd \vx}{
            \displaystyle  \int_{\mathbb{R}^n}
  g(\vx) e^{-\frac{h(x)}{\epsilon}}\rd \vx }
\\
  &=&   f(\vx^*) + \epsilon \left[ f'_i(\vx^*)(\log g)'_i(\vx^*) \Xi_{ij}+ \frac{f''_{ij}(\vx^*)\Xi_{ij}}{2}-
     \frac{f'_i(\vx^*)h'''_{jk\ell}(\vx^*)\Xi^{\frac{1}{2}}_{i\mu} 
       \Xi^{\frac{1}{2}}_{j\nu}\Xi^{\frac{1}{2}}_{k\rho}
            \Xi^{\frac{1}{2}}_{\ell\kappa}\Theta_{\mu\nu\rho\kappa}}{6}  \right]  +
         O(\epsilon^2), \notag 
\label{eq-14}
\end{eqnarray}
as $\epsilon\to 0$, in which Einstein's summation rule is 
adopted, $\vx^*$ is the global minimum 
of $h(\vx)$,

\end{lemma}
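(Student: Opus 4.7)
The plan is to reduce the claim to Lemma~\ref{lemma-2}(b) applied separately to the numerator and the denominator. First, I would insert the auxiliary normalizer $\int_{\mathbb{R}^n} e^{-h(\vx)/\epsilon}\rd\vx$ in both places to write
$$\frac{\int_{\mathbb{R}^n} f(\vx)g(\vx)e^{-h/\epsilon}\rd\vx}{\int_{\mathbb{R}^n} g(\vx)e^{-h/\epsilon}\rd\vx} \;=\; \frac{N_\epsilon}{D_\epsilon},$$
where $N_\epsilon$ and $D_\epsilon$ denote the ratios of the numerator and denominator, respectively, against $\int e^{-h/\epsilon}\rd\vx$. Applying Lemma~\ref{lemma-2}(b) with $\varphi = fg$ gives $N_\epsilon = f(\vx^*)g(\vx^*) + \epsilon A + O(\epsilon^2)$, and with $\varphi = g$ gives $D_\epsilon = g(\vx^*) + \epsilon B + O(\epsilon^2)$, where $A$ and $B$ are the explicit first-order corrections from Lemma~\ref{lemma-2}(b) evaluated with $\varphi = fg$ and $\varphi = g$ respectively.

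Next, I would expand the ratio to first order in $\epsilon$ via the geometric series, obtaining
$$\frac{N_\epsilon}{D_\epsilon} \;=\; f(\vx^*) + \epsilon\,\frac{A - f(\vx^*)\,B}{g(\vx^*)} + O(\epsilon^2),$$
and then simplify $A - f(\vx^*)B$. Substituting the product-rule expansions
$$(fg)'_i = f'_i g + f g'_i,\qquad (fg)''_{ij} = f''_{ij}g + f'_i g'_j + f'_j g'_i + f g''_{ij}$$
into $A$, every term that is proportional to $f$ alone, namely $f g''_{ij}\Xi_{ij}/2$ and $f g'_i h'''_{jk\ell} C_{ijk\ell}/6$ where $C_{ijk\ell} := \Xi^{1/2}_{i\mu}\Xi^{1/2}_{j\nu}\Xi^{1/2}_{k\rho}\Xi^{1/2}_{\ell\kappa}\Theta_{\mu\nu\rho\kappa}$, cancels precisely against the corresponding terms in $f(\vx^*) B$. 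What survives is the $f''_{ij}g\,\Xi_{ij}/2$ contribution, the cross-term $(f'_i g'_j + f'_j g'_i)\Xi_{ij}/2$, and the cubic correction $-f'_i g\, h'''_{jk\ell} C_{ijk\ell}/6$. Using the symmetry of $\Xi$, the two halves of the cross-term combine into $f'_i g'_j \Xi_{ij}$; dividing by $g(\vx^*)$ and invoking the identity $g'_j/g = (\log g)'_j$ reproduces the right-hand side in the statement.

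The main obstacle is essentially bookkeeping: one must verify that the two families of $f$-proportional terms in $A$ really cancel exactly against $fB$, and that the surviving cubic correction involves only $f'_i$ (not $g'_i$). No new analytic input is needed beyond the hypotheses already used to justify Lemma~\ref{lemma-2}(b), so the remainder estimate $O(\epsilon^2)$ is inherited directly from there; one must only ensure $g(\vx^*)\neq 0$ so that the geometric expansion of $1/D_\epsilon$ is legitimate.
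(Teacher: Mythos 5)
Your proposal is correct, and the cancellation you describe does go through: with $A$ and $B$ the first-order corrections from Lemma~\ref{lemma-2}(b) for $fg$ and $g$ respectively, one finds $A-f(\vx^*)B=\tfrac{1}{2}f''_{ij}g\,\Xi_{ij}+f'_ig'_j\Xi_{ij}-\tfrac{1}{6}f'_i g\,h'''_{jk\ell}C_{ijk\ell}$ after using the symmetry of $\mXi$, and dividing by $g(\vx^*)$ yields exactly the claimed expansion (your computation in fact produces the correctly contracted cross term $f'_i(\log g)'_j\Xi_{ij}$; the statement as printed has a dangling index $j$). However, your route differs from the paper's. The paper does not expand numerator and denominator separately: it absorbs $g$ into the exponent, writing $g(\vx)e^{-h(\vx)/\epsilon}=e^{-[h(\vx)-\epsilon\log g(\vx)]/\epsilon}$, locates the shifted minimum $\tilde{\vx}^*=\vx^*+\Delta\vx$ with $\Delta\vx=\epsilon\,\mXi\nabla\log g(\vx^*)+O(\epsilon^2)$, applies Lemma~\ref{lemma-2}(b) once to the perturbed potential, and then Taylor-expands $f(\tilde{\vx}^*)$ back to $\vx^*$; the extra term $\epsilon f'_i(\log g)'_j\Xi_{ij}$ appears there directly as $f'_i\Delta x_i$, with no cancellation to track. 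The trade-off is that the paper must (implicitly) justify using an $\epsilon$-dependent potential in Lemma~\ref{lemma-2}(b) and argue that replacing $\nabla\nabla(h-\epsilon\log g)$ by $\nabla\nabla h$ only perturbs the $O(\epsilon^2)$ remainder, whereas your version keeps the potential fixed and pushes all the work into index bookkeeping plus the geometric-series step, for which your observation that $g(\vx^*)\neq 0$ is needed (and is in any case forced by the appearance of $(\log g)'$ in the statement). Both arguments are sound; yours is arguably the more elementary, the paper's the more transparent about where the new term comes from.
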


Now, we are ready to relate the two kinds of asymptotic series. Recall that the two expansions are
\begin{align} 
     \hat{p}_\epsilon(\vz, t) &= \sum_{n=0}^\infty (\sqrt{\epsilon})^n\hat{p}_n(\mathbf{z}, t),  \label{two-expansions1}\\
     p_\epsilon(\vx, t) &= a(\epsilon, t) \exp \left[ -\frac{1}{\epsilon} \sum_{n=0}^\infty \phi_n(\vx, t)\epsilon^n  \right], \label{two-expansions2}
\end{align}
where $\vz = (\vx - \hat{\vx})$ and $ p_\epsilon(\vx, t) = \hat{p}_\epsilon(\vz, t) /\sqrt{\epsilon}$. Since we will focus on the first two orders in the WKB approximation, we specifically denote that $\phi_0 := \varphi$ and $\phi_1 := \ln \omega$. These two functions have particular meanings: $\varphi$ is known as the large deviation rate function \cite{freidlin1998random,zeitouni1998large}, and $\omega$ is known as the prefactor \cite{graham1984weak,ge2012landscapes},  or the phase space factor \cite{van1992stochastic}, or degeneracy in the classical statistical mechanical terminology  \cite{Qian-2017}.

Recall that the time-dependent matrix $\mSigma(t)$ in Lemma \ref{thm:gaussian.z} is the covariance matrix of $\mathbf{Z}(t)$ and we have checked that $\mathbf{Z}(t)$ has the density $\hat{p}_0(\vz, t)$. Therefore, for  $\mSigma(t)$, it has the formula $\Sigma_{ij}(t) = \int_{\mathbb{R}^n} z_i z_j \hat{p}_0(\vz, t)\rd \vz $, for $1\leq i \leq n, \ 1\leq i \leq n$. Here we further define a time dependent first moment vector $\mathbf{m}(t)$ with respect to $\hat{p}_1(\vx, t)$ by ${m}_i(t) = \int_{\mathbb{R}^n} z_i \hat{p}_1(\vz, t)\rd \vz $, for $1\leq i \leq n$. Note that the functions $\hat{p}_0$ and $\hat{p}_1$ are given in the expansion \eqref{two-expansions1}. Under this framework,  $\mSigma(t)$ and $\mathbf{m}(t)$ must satisfy the differential equations
\begin{align} 
    \frac{ \rd \mSigma(t)}{\rd t} &= \mA(\hat{\vx}(t)) \mSigma + \mSigma   \mA(\hat{\vx}(t))^T +  2\mD, \quad \quad \mSigma(0) = \hat{\mSigma}_0, \label{dynamics.of.mSigam-ma} \\ 
     \frac{\rd \mathbf{m}(t) }{\rd t} &= \mA(\hat{\vx}(t)) + \mathbf{H}(\hat{\vx}(t))\mathbf{\Sigma}, \quad \mathbf{m}(0)= \hat{\mathbf{m}}_0 \label{dynamics.of.mSigam-mb},
\end{align}
in which the initial conditions $\hat{\mSigma}_0$ and $\hat{\mathbf{m}}_0$ are given by the distribution of $\vX_\epsilon(0)$. For example, if the initial probability density of $\vX_\epsilon(0)$ is purely Gaussian, then $\hat{p}_1(\vz, 0) = 0$ for all $\vz$ hence $\hat{\mathbf{m}}_0 = 0$. Furthermore, $\mathbf{A}(\vx)$ is the Jacobian matrix of $\vb(\vx)$, and $\mathbf{H}(\vx)$ is a rank 3 tensor with $\mathbf{H}_i(\vx)$ being the Hessian matrix of $b_i(\vx)$, $1\leq i \leq n$. Eq. \eqref{dynamics.of.mSigam-ma} of $\mSigma(t)$ is from Lemma \ref{thm:gaussian.z}, and Eq. \eqref{dynamics.of.mSigam-mb} can be verified by plugging the expansion \eqref{two-expansions1} into the the Fokker-Planck equation \eqref{seq-PDE2} and using integration by parts.

Based on the above setup,  $\mSigma(t)$,  $\mathbf{m}(t)$ are related to the expansion \eqref{two-expansions1} and $\varphi(\vx, t)$, $\omega(\vx, t)$ are related to the expansion \eqref{two-expansions2}, then we have the following lemma for their correspondence. Recall that  $\hat{\vx}(t)$ is the emergent deterministic trajectory of the diffusion process $\vX_\epsilon(t)$ as $\epsilon \rightarrow 0$.

\begin{lemma}  \label{thm:two-types} 
$\mSigma(t),$  $\mathbf{m}(t)$, $ \varphi(\hat{\vx}(t), t)$, and $\omega(\hat{\vx}(t), t)$ must satisfy the equations
\begin{align} 
    \mSigma(t)  &=  \left[\nabla\nabla \varphi(\hat{\vx}(t), t)\right]^{-1}, \label{sigma-WKB} \\ 
  \mathbf{m}(t) &=  \left[\nabla\nabla \varphi(\hat{\vx}(t), t)\right]^{-1} \nabla \log\omega(\hat{\vx}(t), t)  - \frac{1}{6} \nabla\nabla\nabla \varphi(\hat{\vx}(t), t) \Theta,  \label{m-WKB}
\end{align}
where $
    \big( \nabla\nabla\nabla \varphi(\vx, t) \Theta \big)_i = \sum_{jkl\mu\nu\kappa\rho} \varphi'''_{jk\ell}(\vx, t)\Xi^{\frac{1}{2}}_{i\mu} 
       \Xi^{\frac{1}{2}}_{j\nu}\Xi^{\frac{1}{2}}_{k\rho}
            \Xi^{\frac{1}{2}}_{\ell\kappa}\Theta_{\mu\nu\rho\kappa}$, $ \mXi = \left[ \nabla\nabla\varphi(\vx, t)\right]^{-1}$, and $\Theta$ is defined by Eq. \eqref{Theta} in Lemma \ref{lemma-2}. 
\end{lemma}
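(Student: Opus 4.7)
The plan is to exploit the fact that the two asymptotic series \eqref{two-expansions1} and \eqref{two-expansions2} represent one and the same probability density $p_\epsilon(\vx,t)$, so every moment of $\vX_\epsilon(t)-\hat{\vx}(t)$ admits two independent asymptotic evaluations that must agree order by order in $\sqrt\epsilon$. Matching the second such moment will yield \eqref{sigma-WKB}; matching the first will yield \eqref{m-WKB}. A preliminary observation used throughout is that, because the CLT density concentrates on $\vx=\hat{\vx}(t)$ with mean-zero Gaussian leading order $\hat p_0$, the WKB phase $\varphi(\cdot,t)$ must attain its global minimum at $\vx^{*}=\hat{\vx}(t)$ with $\nabla\varphi(\hat{\vx}(t),t)=\mathbf 0$; this is the point $\vx^{*}$ required to invoke Lemmas~\ref{lemma-2} and \ref{lemma-3}.

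For \eqref{sigma-WKB}, the CLT side is immediate: using $\hat p_\epsilon=\hat p_0+\sqrt\epsilon\,\hat p_1+O(\epsilon)$ together with the definition $\Sigma_{ij}(t)=\int z_iz_j\hat p_0\,\rd\vz$ and the fact that $\hat p_0$ has zero mean, one obtains $\mathbb E\bigl[(X_{\epsilon,i}-\hat x_i)(X_{\epsilon,j}-\hat x_j)\bigr]=\epsilon\,\Sigma_{ij}(t)+O(\epsilon^{3/2})$. On the WKB side I would apply Lemma~\ref{lemma-3} with $f(\vx)=(x_i-\hat x_i)(x_j-\hat x_j)$, $g(\vx)=\omega(\vx,t)^{-1}$, and $h(\vx)=\varphi(\vx,t)$. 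Because $f(\hat{\vx})=0$ and $\nabla f(\hat{\vx})=\mathbf 0$, only the term $\tfrac12 f''_{k\ell}(\hat{\vx})\,\Xi_{k\ell}$ survives at order $\epsilon$, and it collapses to $\Xi_{ij}=[\nabla\nabla\varphi(\hat{\vx}(t),t)]^{-1}_{ij}$. Matching the two evaluations at order $\epsilon$ then gives \eqref{sigma-WKB}.

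For \eqref{m-WKB}, the CLT side gives $\mathbb E[X_{\epsilon,i}-\hat x_i]=\sqrt\epsilon\,\mathbb E[Z_i]=\epsilon\,m_i(t)+O(\epsilon^{3/2})$, again because $\hat p_0$ is mean zero. On the WKB side I would apply Lemma~\ref{lemma-3} once more with $f(\vx)=x_i-\hat x_i$: here $f'_k=\delta_{ki}$ and $f''\equiv 0$, so the $O(\epsilon)$ Laplace coefficient reduces to the prefactor contribution $(\log g)'_k\Xi_{ik}$ together with the cubic contribution $-\tfrac16\varphi'''_{jk\ell}\Xi^{1/2}_{i\mu}\Xi^{1/2}_{j\nu}\Xi^{1/2}_{k\rho}\Xi^{1/2}_{\ell\kappa}\Theta_{\mu\nu\rho\kappa}$. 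Rewriting $\log g=-\log\omega$ and equating with the CLT side at order $\epsilon$ then delivers \eqref{m-WKB}. The chief technical difficulty is the careful bookkeeping of $\sqrt\epsilon$-orders across the two expansions, together with checking that the subleading WKB correction arising from $\phi_2$ enters $p_\epsilon$ only as a multiplicative $O(\epsilon)$ factor and therefore does not contaminate the $O(\epsilon)$ computation on which \eqref{m-WKB} depends. A secondary subtlety is verifying that the symmetrised index contraction of $\varphi'''$ with $\Xi^{1/2}$ and $\Theta$ produced by Lemma~\ref{lemma-3} indeed reproduces the coordinate-free object $\nabla\nabla\nabla\varphi(\hat{\vx},t)\,\Theta$ defined immediately after \eqref{m-WKB}.
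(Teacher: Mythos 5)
Your proposal follows essentially the same route as the paper's proof: since the two expansions represent one and the same density, the first and second moments of $\vX_\epsilon(t)-\hat{\vx}(t)$ are evaluated once through the CLT expansion (giving $\epsilon\,\mSigma(t)$ and $\epsilon\,\mathbf{m}(t)$ at the leading nontrivial order, using that $\hat p_0$ is a centered Gaussian) and once through Lemma~\ref{lemma-3} applied to the WKB form, and matching at order $\epsilon$ yields \eqref{sigma-WKB} and \eqref{m-WKB}; this is exactly the paper's argument via \eqref{expand-1}--\eqref{expand-2}. One caveat: you take $g=\omega^{-1}$, which is what \eqref{two-expansions2} with $\phi_1=\ln\omega$ literally gives, but then $(\log g)'=-(\log\omega)'$ and the prefactor term in \eqref{m-WKB} comes out with the opposite sign. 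To land on the formula as stated one needs the convention of \eqref{wkb:stationary}, i.e.\ $p_\epsilon\propto\omega\,e^{-\varphi/\epsilon}$ and hence $g=\omega$ in Lemma~\ref{lemma-3}; the discrepancy traces to the paper's own sign inconsistency between \eqref{asymptotic2} and \eqref{wkb:stationary} rather than to a flaw in your method. The remaining points you raise --- the vanishing of $f$ and $\nabla f$ at $\hat{\vx}(t)$ for the quadratic test function, the need for $\hat p_0$ to be mean zero so that the first moment starts at order $\epsilon$, and the fact that $\phi_2$ enters only as a multiplicative $1+O(\epsilon)$ correction --- are all consistent with, and somewhat more explicit than, what the paper writes.
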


\begin{proof}
By the change of variable $\vz = (\vx - \hat{\vx})/\sqrt{\epsilon}$, 
we have the following two equations
\begin{align} 
 \int (\sqrt{\epsilon}\mathbf{z}) (\sqrt{\epsilon}\mathbf{z})^T \hat{p}_{\epsilon}(\mathbf{z},t) \rd\mathbf{z} &= \int (\mathbf{x}- \hat{\mathbf{x}}) (\mathbf{x}- \hat{\mathbf{x}})^T p_{\epsilon}(\mathbf{x},t) \rd\mathbf{x}, \label{expand-1} \\
 \int (\sqrt{\epsilon}\mathbf{z}) \hat{p}_{\epsilon}(\mathbf{z},t) \rd\mathbf{z} &= \int (\mathbf{x}- \hat{\mathbf{x}}) p_{\epsilon}(\mathbf{x},t) \rd\mathbf{x} \label{expand-2}.
\end{align}
Plug the expansion \eqref{two-expansions1} into the left side of \eqref{expand-1}, by Lemma \ref{thm:gaussian.z}, the left side of \eqref{expand-1} becomes
\begin{align} \label{epsilon-mSigma}
    \epsilon\mSigma(t)+ o(\epsilon).
\end{align}
For a fixed $t$, the point $\vx = \hat{\vx}(t)$ on the deterministic trajectory is the global minimum of $\varphi(\vx, t)$. Therefore, to plug the expansion \eqref{two-expansions2} into the right side of \eqref{expand-2}, by Lemma \ref{lemma-3}, we have
\begin{align}  \label{epsilon-varphi}
     \epsilon \left[\nabla\nabla\varphi(\hat{\vx}(t), t)\right]^{-1} + o(\epsilon).
\end{align}
With Eq. \eqref{epsilon-mSigma} and Eq. \eqref{epsilon-varphi}, we thus obtain 
\begin{align} \label{eqn:d.sigma}
   \mSigma(t)  =  \left[\nabla\nabla\varphi(\hat{\vx}(t), t)\right]^{-1}. 
\end{align}
By a similar approach, applying Lemma \ref{thm:gaussian.z} to the left side of \eqref{expand-2} and Lemma \ref{lemma-3} to the right side of it, we have that
\begin{align}  \label{eqn:am}
    \mathbf{m}(t) =  \left[\nabla\nabla\varphi(\hat{\vx}(t), t)\right]^{-1} \nabla \ln\omega(\hat{\vx}(t), t)  - \frac{1}{6} \nabla\nabla\nabla \varphi(\hat{\vx}(t), t) \Theta.
\end{align}

\end{proof}

The leading order $\varphi(\vx,t)$ of the time-dependent WBK ansatz \eqref{asymptotic2} is known as a time-dependent large deviation rate function. Our work (Lemma \ref{thm:two-types})  relates the curvature of the time-dependent large deviation rate function near its infimum with the local Gaussian fluctuations of diffusion processes. In the case of independent and identically distributed (i.i.d.) random variables sampling, the inverse of the curvature of large deviation rate function equivalent to the variance of each random variable is one of the important properties of the rate function \cite{bryc1993remark,touchette2009large}. Eq. \eqref{sigma-WKB} in Lemma \ref{thm:two-types} can be regarded as an extension of this property to the case of random processes. The Freidlin-Wentzell theory \cite{freidlin1998random} gives a clear definition of the large deviation rate function of random processes. From the trajectory standpoint, the action functional is defined as \cite{freidlin1998random}
\begin{align} \label{action}
    \mathcal{S}_{0,t}(\xi)=\frac{1}{4}\int_{0}^{t}[\dot{\xi_{s}}-\vb(\xi_{s})]\mD^{-1}[\dot{\xi_{s}}-\vb(\xi_{s})]\rd s,
\end{align}
where $\xi$ is the set of all smooth paths of the process \eqref{macro.scale4} on the interval $[0,t]$. Then the time-dependent large deviation rate function is the minimum of action among the set of $\xi$ 
\begin{align} \label{min-action}
    \varphi(\vx, t) = \min_{\xi_0 = \vx_0, \xi_t=\vx} \mathcal{S}_{0,t}(\xi),
\end{align}
in which $\vx_0$ and $\vx$ are the initial and end conditions of the process, respectively.  

By Eqs. \eqref{action} and \eqref{min-action},  $\varphi(\vx,t)$ is no longer just a mathematical concept of the leading order term of the logarithmic asymptotics of probability densities. Borrowing the concept from classical mechanics, the integrand in the integral \eqref{action} is called the Lagrangian of the action and there is a corresponding Hamiltonian of the system defined by the Legendre dual of the Lagrangian \cite{zhou2016construction}
\begin{align} \label{Hamiltonian}
 H(\xi, \mathbf{p}  ) = \vb(\xi) \cdot \mathbf{p} + \mD \mathbf{p} \cdot \mathbf{p}  .        
\end{align}
Furthermore, based on the Hamiltonian given in Eq. \eqref{Hamiltonian}, the large deviation rate function has to satisfy the Hamilton-Jacobi equation
\begin{align} \label{eq:HJE1}
    \frac{\partial \varphi(\vx , t)}{\partial t} = - H(\vx, \nabla \varphi). 
\end{align}
Finding solutions of the Hamilton-Jacobi equation \eqref{eq:HJE1} is still an open problem. By Lemma \ref{thm:two-types}, with the dynamics of $\mSigma(t)$  in Eq. \eqref{dynamics.of.mSigam-ma} and $\mathbf{m}(t)$  in Eq. \eqref{dynamics.of.mSigam-mb}, if the solution of the prefactor $\omega(\vx, t)$ is given, e.g., an uniform prefactor, then we can derive the dynamics of $\nabla \nabla\varphi(\hat{\vx}(t),t)$ and $\nabla\nabla\nabla\varphi(\hat{\vx}(t),t)$. These results can help us approximate the solution of the HJE near its infimum: by the multivariable Taylor's expansion, for $\| \vx - \hat{\vx} \| < \delta$, we have a {\em third order approximation}
\begin{align} \label{HJE-approx}
    \varphi(\vx , t) = \frac{1}{2}\left[ (\vx - \hat{\vx}(t)) \cdot \nabla\right]^2 \varphi(\hat{\vx}(t) ,t) + \frac{1}{3}\left[ (\vx - \hat{\vx}(t)) \cdot \nabla\right]^3 \varphi(\hat{\vx}(t) ,t)   + o(\delta^3),
\end{align}
and the first two terms on the right side can be numerically solved with the dynamics of $\nabla \nabla\varphi(\hat{\vx}(t),t)$ and $\nabla\nabla\nabla\varphi(\hat{\vx}(t),t)$ obtained by our theory.

To summarize the novelty and the significance of Lemma \ref{thm:two-types} in the following three points: 
\begin{enumerate}
    \item We show rigorously that the covariance matrix of the local time-inhomogeneous Gaussian process near a deterministic trajectory is equivalent to the inverse of the curvature of the time-dependent large deviation rate function near its infimum. 
    \item By having the dynamics of $\nabla \nabla\varphi(\hat{\vx}(t),t)$ and $\nabla\nabla\nabla\varphi(\hat{\vx}(t),t)$ from the lemma, we can obtain a third-order approximation of the solution of the HJE \eqref{eq:HJE1} near its infimum. 
    \item For analyzing a stochastic stable limit cycle in the next section, with the Lyapunov differential equation of  $\mSigma(t)$  \eqref{dynamics.of.mSigam-ma} and the equation $\mSigma(t) = \left[\nabla\nabla \varphi(\hat{\vx}(t), t)\right]^{-1}$ in the lemma, we can further study the asymptotic behaviors of the time-inhomogeneious Gaussian process from a transient state to an invariant set, and relate this result to the previous works \cite{vance1996fluctuations,ge2012landscapes,li2014landscape,lin2019quasi} on the curvature of the stationary large deviation rate function near a limit cycle.
\end{enumerate}

\section{Main results of stochastic limit-cycle oscillations}
\label{ch3}

In this section, we focus on nonlinear stochastic complex systems having stable limit cycles at the macroscopic scale. In chapter XIII. 7 of the textbook by van Kampen \cite{van1992stochastic}, the author proposed two examples of stochastic systems with stable limit cycles: one is dynamics of the Brusselator in the chemical reaction and the other one is the generalized Ginzburg–Landau equation in statistical mechanics. The model of the former started from a master equation for the Markov jump process and the later began with a SDE for the diffusion process. More examples of stochastic chemical kinetics with limit cycle oscillators were thoroughly discussed in the previous works \cite{dykman1993stationary,vance1996fluctuations}. In contradistinction to stochastic chemical kinetics, our work follows the idea of the second example in the van Kampen's book along the line of the continuous representation of complex systems: we start with a randomly perturbed diffusion process satisfied the sequence of SDEs \eqref{macro.scale4}. Recall it has the form 
\begin{align}  \label{macro.scale5}
    \rd \vX_\epsilon(t) = \vb(\vX_\epsilon) \rd t + [2\epsilon\mathbf{D}]^{\frac{1}{2}} \rd \mB(t).
\end{align}
In this section, we assume $\mD$ is positive definite in particular. Furthermore, there is an emergent deterministic dynamics as $\epsilon \rightarrow 0$,
\begin{align} 
\rd \mathbf{x}(t) = \mathbf{b}(\mathbf{x}) \rd t, \label{ODE}
\end{align}
and the solution  $\hat{\mathbf{x}}(t)$ of this ODE \eqref{ODE} with initial condition $\hat{\mathbf{x}}(0)$ has a invariant solution as a stable limit cycle $\Gamma$. Recall that the corresponding FPE of Eq. \eqref{macro.scale5} is
\begin{align}
\frac{\partial p_{\epsilon}}{\partial t} &= - \nabla \cdot \mathbf{J} [p_{\epsilon} ], \quad \mathbf{J}[p_{\epsilon}] \equiv \vb(\vx)p_{\epsilon} - \epsilon\mathbf{D} \nabla p_{\epsilon}.
\label{seq-PDE3}
\end{align}
Under this  framework, we will apply the preliminaries obtained in Sec. \ref{ch2} to analyze stochastic limit oscillators by taking time goes to infinity. The following section is about asymptotic analysis of this transition from a finite time to the infinite time limit.


\subsection{The process asymptotic to a time-invariant set} \label{ch3-1}
In Sec. \ref{ch2-3}, we relate the time-dependent large deviation rate function and the prefactor in the WKB method with the stochastic trajectories of randomly perturbed dynamics systems. By this correspondence, we further inspect asymptotic behaviors of the time-dependent large deviation rate function and the prefactor as time goes to infinity. By plugging the WKB ansatz \eqref{asymptotic2} into the FPE \eqref{seq-PDE2}  with equating likeorder terms, we obtain two partial differential equations of $\varphi(\vx.t)$ and $\omega(\vx,t)$ respectively,
\begin{align}
    \frac{\partial\varphi(\vx, t)}{\partial t} &= - \nabla\varphi(\vx, t) \cdot \vgamma(\vx, t)  \label{eq:rate-function}\\
    \frac{\partial\omega(\vx, t)}{\partial t} &= -\nabla\cdot(\vgamma(\vx,t) \omega(\vx,t)) - \mD \nabla \varphi(\vx,t) \cdot \nabla \omega(\vx, t), \label{eq:prefactor}
\end{align}
where $\vgamma(\vx,t) = \mD \nabla \varphi(\vx, t) + \vb(\vx)$. We will later discuss the meaning of $\vgamma(\vx,t)$, which is closely related the concept of probability flux and the Onsager's thermodynamic force.

Let us start from analysis of the solution of $\varphi(\vx, t)$ in Eq. \eqref{eq:rate-function}, which is a Hamilton-Jacobi equation. The HJE derived in this place by  equating the leading order term in the time-dependent WKB ansatz is the same one obtained in the previous derivation (Eq. \eqref{min-action} - Eq. \eqref{eq:HJE1}) by minimizing of action among all possible smooth paths.  However, when we further analyze time-invariant solutions of this HJE, we shall notice an essential difference between the WKB-type method and the trajectory-based method, which is due to the interchange of limits of $t$ and $\epsilon$.

Following the idea of WKB anstaz \eqref{asymptotic2} for the time-dependent probability density, if the invariant probability exists, it could be written as the asymptotic form
\begin{align} \label{wkb:stationary}
    \pi_\epsilon(\vx) = \hat{a}(\epsilon) \exp \left[ - \frac{1}{\epsilon} \hat{\varphi}(\vx) +  \ln \omega(\vx) + O(\epsilon)  \right]
\end{align}
and it is equivalent to say that 
\begin{align}
    \hat{\varphi}(\vx) = - \lim_{\epsilon \rightarrow 0} \lim_{t \rightarrow \infty} \ln p_{\epsilon}(\vx ,t|\vx_{0},0),
\end{align}
where $\hat{\varphi}$ is independent of the initial condition $\vx_0$ and is well-defined in the whole space $\mathbb{R}^n$. On the other hand, from the standpoints of trajectories, the quasipotential of the system is defined by 
\begin{align}
\varphi(\vx;\vx_f):=\inf_{t>0}\inf_{\xi}\{\mathcal{S}_{0,t}(\xi):\xi_{0}=\vx_f,\xi_{t}=\vx\},
\label{quasipotential}
\end{align}
where $\vx_f$ is a fixed point of the deterministic dynamics $\hat{\vx}'=\vb(\hat{\vx})$. This definition extends the definition of minimizing the action \eqref{min-action} from a fixed $t$ to all $t > 0$ and it has a corresponding probabilistic representation 
\begin{align}
    \varphi(\vx; \vx_f) = -  \lim_{t \rightarrow \infty} \lim_{\epsilon \rightarrow 0} \ln p_{\epsilon}(\vx ,t |\vx_f,0).
\end{align}

The quasipotential \eqref{quasipotential} is one of the invariant solutions of the HJE \eqref{eq:rate-function} \cite{freidlin1998random}. It may contain non-differential parts since the HJE can have a non-smooth solution after a certain finite time by studying the characteristics of it \cite{evans1998partial}. We shall emphasize that the two potentials, $ \hat{\varphi}(\vx)$ and $\varphi(\vx;\vx_f)$, have the same shape only in the domain (denoted by $\mathcal{D}$) where they are smooth and strictly convex by the Freidlin-Wentzell uniqueness theorem of the orthogonal decomposition of the drift function $\vb$  \cite{freidlin1998random}.  Since the quasipotential $\varphi(\vx;\vx_f)$ is defined strictly by the trajectories of dynamics in Eq. \eqref{quasipotential}, to equate $ \hat{\varphi}(\vx)$ and $\varphi(\vx;\vx_f)$ on $\mathcal{D}$, we can justify the leading order term of the WKB ``ansatz" \eqref{wkb:stationary} for the invariant probability at least on $\mathcal{D}$. Analogously, $\vx_f$ can be extended from a fixed point to a invariant set, so the above statement is also true for the systems with a stable limit cycle \cite{freidlin1998random}. In the following work, we will restrict our analysis of dynamics contained in $\mathcal{D}$ and assume $\mathcal{D}$ is compact, and use one brief notation $\varphi(\vx)$ to represent both of the potentials. 

Based on the above justification of the time-invariant WKB ansatz \eqref{asymptotic2}, we can plug it into the stationary FPE \eqref{seq-PDE3} to get the system satisfied three equations \cite{Qian-2017}
\begin{eqnarray} 
& \mathbf{b}(\mathbf{x}) = -  \mathbf{D}\nabla \varphi(\mathbf{x}) + \vgamma(\mathbf{x}), \label{b.var.gamma.decomp}\\
    &\nabla\varphi(\mathbf{x}) \cdot \vgamma(\mathbf{x}) = 0, \label{b.var.gamma.decomp2}\\
    & \nabla \cdot (\omega (\mathbf{x}) \vgamma(\mathbf{x})) = - \nabla \varphi(\mathbf{x}) \cdot \mD\nabla \omega(\mathbf{x}). \label{divofomega}
\end{eqnarray}
Note that the vector field $ \mathbf{b}(\mathbf{x})$ represents deterministic dynamics and can be decomposed to two terms $ \nabla\varphi(\vx)  \perp \vgamma(\mathbf{x}) $ which is consistent with the FW's orthogonal decomposition \cite{freidlin1998random}. In comparison to the gradient flow $\nabla\varphi(\vx)$, dynamics following the vector field $\vgamma$ represents the part of circular motion of $\vb$. By the previous works \cite{holland1978stochastically,kurrer1991effect,vance1996fluctuations,ge2012landscapes,li2014landscape,lin2019quasi}, we have the following three propositions of $\varphi(\vx)$:
\begin{enumerate}
    \item  \label{prop-varphi-1} We have justified the leading order term $\varphi(\vx)$ in the stationary WKB ansatz. But it is based on the existence of the stationary probability distribution $\pi_\epsilon$. The existence of $\pi_\epsilon$ for stochastic stable limit cycles has been proved in the work of Holland \cite{holland1978stochastically}.
    \item  \label{prop-varphi-2} On a limit cycle, $\varphi(\vx)$ and $\nabla \varphi(\vx)$ are always zero. The landscape of  $\varphi(\vx)$ has a Mexican hat shape and the bottom of the Mexican hat ring characterizes the deterministic trajectory of the cycle \cite{li2014landscape}; And the second derivative of $\varphi(\vx)$ tangential to the cycle is also zero, which means the Gaussian fluctuations along the direction tangential to the cycle is eventually smeared out \cite{kurrer1991effect,vance1996fluctuations,ge2012landscapes}.  
    \item \label{prop-varphi-3} Along the cycle, the matrix $\nabla \nabla \varphi(\vx)$ is positive semi-definite \cite{vance1996fluctuations,lin2019quasi}. Specifically, the smallest eigenvalue of $\nabla \nabla \varphi(\vx)$ is zero on the cycle and the corresponding eigenvector is tangential to the cycle  (by the proposition \eqref{prop-varphi-2}). The rest of eigenvalues are positive, i.e., the Gaussian fluctuations perpendicular to the cycle  are outward and damped out by the dissipation toward the limit cycle. 
\end{enumerate}

The above is the setup of  $\varphi(\vx, t)$. Let us continue on analysis of the solution  of $\omega(\vx, t)$ in Eq. \eqref{eq:prefactor}. It can be rearranged as
\begin{align} \label{eq:cont-omega}
    \frac{\partial \omega(\vx,t)}{\partial t} + \nabla \cdot(\vb(\vx) \omega(\vx, t)) = -2 \mD \nabla\varphi(\vx, t) \nabla \omega(\vx, t) - \mD \nabla\nabla\varphi(\vx, t) \omega(\vx, t),
\end{align}
where $\mD \nabla\nabla\varphi(\vx, t)$ is the  the Frobenius product of the matrix $\mD$ and the matrix $\nabla\nabla\varphi(\vx, t)$. We can identify
Eq. \eqref{eq:cont-omega} is a {\em continuity equation} that describes the transport following the vector field $\hat{\vx}'=\vb(\hat{\vx})$ with a density-dependent sink (source) term on the right hand side. Therefore, the solution of Eq. \eqref{eq:cont-omega} gives us a measure of a large number particle system ``without" noise by the Eulerian description of dynamics $\hat{\vx}'=\vb(\hat{\vx})$. The original effect of noise, $\mD$, appears in the sink (source) term in this continuity equation of $\omega$. On the other hand, if we follow the dynamics of $\hat{\vx}'=\vb(\hat{\vx})$, we have the dynamics of $\omega$ by the Lagrangian description
\begin{align} \label{eq:rate-of-omega}
\frac{\rd \omega(\hat{\vx}(t), t)}{\rd t} &= \frac{\partial \omega(\hat{\vx}(t) ,t )}{\partial t} + \nabla \omega(\hat{\vx}(t), t ) \frac{\rd \hat{\vx}(t)}{\rd t} \notag  \\
 &= \nabla \cdot \vb(\vx) \omega(\hat{\vx}(t), t)  - \mD \nabla\nabla\varphi(\hat{\vx}(t), t)\omega(\hat{\vx}(t), t) \\
  &= - \nabla \cdot \vgamma(\vx(t) ,t) \omega(\hat{\vx}(t), t) \notag. 
\end{align}
To compare Eq. \eqref{eq:cont-omega} with Eq. \eqref{divofomega}, we have that $ \omega(\vx)$ in the WKB ansatz \eqref{wkb:stationary} is one of invariant solutions of Eq. \eqref{eq:cont-omega}. However, in distinction to the HJE of $\varphi(\vx, t)$, we shall notice that the uniqueness and the smoothness of invariant solutions of the PDE \eqref{eq:cont-omega} are not discussed in the present work.

Based on the above setup, we have the following theorem about the curvature of invariant large deviation rate function $ \nabla\nabla\varphi(\vx )$ and the logarithm of the prefactor $\ln\omega(\vx)$ along the dynamics $\vx^*(t)$ on the limit cycle $\Gamma$.  For this theorem, we require three assumptions of regular conditions:
\begin{enumerate}
    \item \label{assume-1} The functions $\varphi(\vx, t)$ and $\ln\omega(\vx,t)$ are smooth enough with respect to $\vx$, and the derivatives uniformly converge on $\mathcal{D}$ as $t \rightarrow \infty$, in which the compact domain $\mathcal{D}$ is defined above.
    \item \label{assume-1.5} The drift function $\vb(\vx)$ and its Jacobian $\mathbf{A}(\vx)$ are continuous on $\mathcal{D}$. 
    \item \label{assume-2} For all $t \geq 0$, the covariance matrix $\mSigma(t)$ in Lemma \ref{thm:two-types} is nonsingular, i.e, the Gaussian fluctuations of each direction is nonzero. Therefore, the inverse of Eq. \eqref{sigma-WKB}  in Lemma \ref{thm:two-types} is well-defined: $\left[\mSigma(t)\right]^{-1} = \nabla\nabla \varphi(\hat{\vx}(t), t)$ for all  $t \geq 0$.
\end{enumerate}

\begin{theorem}  \label{thm:dynamics-mSigma-prefactor}
Let $ \vx^*(t)$ be a deterministic trajectory with $\vx^*(0) \in \Gamma$,  $\left[ \mSigma^*(t)\right]^{-1} :=  \nabla\nabla\varphi(\vx^*(t) ) $, and $\omega^*(t) := \omega(\vx^*(t))$. For all $t > 0$,
\begin{align}
    \frac{ \rd \left[ \mSigma^*(t)\right]^{-1}}{\rd t} &= -\left[ \mSigma^*\right]^{-1} \mA(\vx^*)  - \mA(\vx^*)^T  \left[ \mSigma^*\right]^{-1}  -  2 \left[ \mSigma^*\right]^{-1} \mD \left[ \mSigma^*\right]^{-1},  \label{dynamics-cov-limit-cycle}\\
    \frac{\rd \ln\omega^*(t)}{\rd t} &= -\nabla \cdot \vb(\vx^*) - \mD \left[ \mSigma^*\right]^{-1} = -\nabla \cdot \vgamma(\vx^*). \label{dynamics-prefactor-limit-cycle}
\end{align}
Furthermore, the smallest eigenvalues of the matrix $\left[ \mSigma^*(t)\right]^{-1}$ is zero with the  eigenvector tangential to $\Gamma$ and the other eigenvalues are positive with the eigenvectors in the hyperplane perpendicular to $\Gamma$. 
\end{theorem}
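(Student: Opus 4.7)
The plan is to assemble the three claims of the theorem from building blocks already in place: the Lyapunov equation of Lemma~\ref{thm:gaussian.z}, the identification $\mSigma(t)=[\nabla\nabla\varphi(\hat{\vx}(t),t)]^{-1}$ of Lemma~\ref{thm:two-types}, the continuity equation \eqref{eq:cont-omega} for the prefactor $\omega$, the Freidlin--Wentzell decomposition \eqref{b.var.gamma.decomp}, and the three stationary-$\varphi$ propositions listed immediately before the theorem.

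For the inverse-covariance dynamics, I would differentiate $\mSigma\mSigma^{-1}=\mI$ in time to obtain $\frac{d}{dt}\mSigma^{-1}=-\mSigma^{-1}\dot{\mSigma}\,\mSigma^{-1}$ and substitute the Lyapunov equation of Lemma~\ref{thm:gaussian.z}. The product structure collapses neatly to
\begin{equation*}
\frac{d\mSigma^{-1}}{dt}=-\mSigma^{-1}\mA(\hat{\vx})-\mA(\hat{\vx})^T\mSigma^{-1}-2\mSigma^{-1}\mD\mSigma^{-1},
\end{equation*}
and the non-singularity hypothesis on $\mSigma$ legitimizes the manipulation for every $t\ge 0$. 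To transfer this from a generic deterministic trajectory to an invariant trajectory $\vx^{*}(t)$ on $\Gamma$, I would use Lemma~\ref{thm:two-types} to replace $[\mSigma(t)]^{-1}$ by $\nabla\nabla\varphi(\hat{\vx}(t),t)$ and let $t\to\infty$ with $\hat{\vx}(t)$ approaching the cycle; the uniform-convergence hypothesis on the derivatives of $\varphi$ together with the continuity of $\mA=\nabla\vb$ ensures that the evolution equation survives the limit, with $\mA$ evaluated on the cycle.

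For the prefactor equation, I would take the Lagrangian derivative of \eqref{eq:cont-omega} along the deterministic flow $\dot{\hat{\vx}}=\vb(\hat{\vx})$. Expanding $\nabla\!\cdot\!(\vb\,\omega)=(\nabla\!\cdot\!\vb)\,\omega+\vb\!\cdot\!\nabla\omega$ gives
\begin{equation*}
\frac{d\omega(\hat{\vx}(t),t)}{dt}=-(\nabla\!\cdot\!\vb)\,\omega-2\mD\nabla\varphi\cdot\nabla\omega-\mD\nabla\nabla\varphi\,\omega.
\end{equation*}
On $\Gamma$ the second of the listed stationary-$\varphi$ propositions forces $\nabla\varphi=0$, annihilating the middle term. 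Dividing by $\omega^{*}$ and recognizing $\nabla\nabla\varphi(\vx^{*}(t))=[\mSigma^{*}]^{-1}$ yields the first equality. For the second, I substitute \eqref{b.var.gamma.decomp}, $\vb=-\mD\nabla\varphi+\vgamma$, and take divergences; since $\mD$ is constant, $\nabla\!\cdot\!\vb=-\mD\nabla\nabla\varphi+\nabla\!\cdot\!\vgamma$, which collapses the two surviving right-hand-side terms into $-\nabla\!\cdot\!\vgamma(\vx^{*})$.

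The spectral statement I would not reprove from scratch; it is precisely the content of the second and third stationary-$\varphi$ propositions cited just above the theorem. The Mexican-hat shape of $\varphi$ on the invariant set forces its second derivative in the tangential direction to vanish, so $[\mSigma^{*}(t)]^{-1}=\nabla\nabla\varphi(\vx^{*}(t))$ has a zero eigenvalue whose eigenvector is tangent to $\Gamma$, while dissipative contraction toward the cycle makes the remaining eigenvalues strictly positive on the normal hyperplane. The genuinely delicate point in the argument is the joint limit $t\to\infty$, $\epsilon\to 0$: Lemma~\ref{thm:two-types} was established at finite $t$ under the non-singularity of $\mSigma$, whereas the tangential degeneracy on $\Gamma$ violates that hypothesis in the limit. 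Handling this cleanly requires the coordinate reduction foreshadowed in the introduction --- projecting out the tangential direction and recasting the result as a positive-definite periodic Riccati equation on the $(n-1)$-dimensional normal bundle --- and that is where I expect the real technical work to sit.
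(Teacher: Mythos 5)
Your proposal follows essentially the same route as the paper's proof: inverting the Lyapunov equation of Lemma~\ref{thm:gaussian.z}, transferring it to the cycle via the uniform convergence of $\nabla\nabla\varphi(\cdot,t)$ as $\hat{\vx}(t)\to\Gamma$, taking the Lagrangian derivative of the continuity equation for $\omega$ together with the decomposition $\vb=-\mD\nabla\varphi+\vgamma$, and handling the degenerate tangential direction by the Frenet-frame reduction to an $(n-1)\times(n-1)$ periodic Riccati equation. The only step you leave implicit is how the evolution equation, obtained only for asymptotically large $t$, is extended to all $t>0$ as the theorem asserts --- the paper closes this by invoking the periodicity of $\left[\mSigma^*(t)\right]^{-1}$ and $\mA(\vx^*(t))$ and a phase shift before sending the $O(\epsilon)$ error to zero.
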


\begin{proof}

In this proof, the norm  $\| \cdot   \|$ represents the supremum norm. By our setup of the diffusion process \eqref{macro.scale5}, the macroscopic deterministic trajectory $\hat{\vx}(t)$, with $\hat{\vx}(0) \in \mathcal{D}$,  converges to the stable limit cycle $\Gamma$, so we have
\begin{align} \label{asym-limit-cycle0}
    \lim_{t \rightarrow \infty} \min_{\vx^* \in \Gamma} \|  \hat{\vx}(t) - \vx^*   \| = 0.
\end{align}
By  the assumption \eqref{assume-1} and $\lim_{t \rightarrow \infty} \varphi(\vx, t) = \varphi(\vx) $ in the setup, we have that
\begin{align} \label{asym-limit-cycle0.5}
    \lim_{t \rightarrow \infty} \nabla\nabla\varphi(\vx, t) =  \nabla\nabla\varphi(\vx) \quad \text{uniformly on} \ \mathcal{D}.
\end{align}
Therefore, for any $\epsilon > 0$, there exists $T(\epsilon)>0$ such that for every $t > T(\epsilon)$, there is a point $\vx^*(t) \in \Gamma$ with its corresponding initial point $\vx^*(0) \in \Gamma$ and
\begin{align} \label{ext-varphi-limit-cycle}
    \|  \nabla\nabla\varphi(\hat{\vx}(t), t)  - \nabla\nabla \varphi(\vx^*(t))  \| =    O(\epsilon), 
\end{align}
where we use triangular inequality  
\begin{align} \label{asym-limit-cycle}
    \|  \nabla\nabla\varphi(\hat{\vx}(t), t)  - \nabla\nabla \varphi(\vx^*(t))  \| < \|  \nabla\nabla\varphi(\hat{\vx}(t), t)  - \nabla\nabla\varphi(\hat{\vx}(t))  \| + \|  \nabla\nabla\varphi(\hat{\vx}(t))  - \nabla\nabla \varphi(\vx^*(t))  \| 
\end{align}
with  Eq. \eqref{asym-limit-cycle0.5} for the first term and  Eq.  \eqref{asym-limit-cycle0}  for the second term on the right side of the inequality. Apply  $\left[\mSigma^*(t)\right]^{-1} := \nabla\nabla\varphi(\vx^*(t))$ defined in the theorem, and  $\left[\mSigma(t)\right]^{-1} = \nabla\nabla \varphi(\hat{\vx}(t), t) $ given in Lemma \ref{thm:two-types} with the assumption \eqref{assume-2}, to Eq. \eqref{ext-varphi-limit-cycle}, we further have that
\begin{align} \label{def:fluctuation-on-cycle2}
     \left\| \left[\mSigma^*(t)\right]^{-1} -
    \left[\mSigma(t)\right]^{-1} \right\| = O(\epsilon).
\end{align}
Following the same approach for the result \eqref{def:fluctuation-on-cycle2} with the assumption \eqref{assume-1} that the derivatives of $\varphi$ uniformly converge on $\mathcal{D}$ and the assumption \eqref{assume-1.5}  that $\vb$ is continuous on $\mathcal{D}$, we can also show that 
\begin{align} \label{def:fluctuation-on-cycle3}
     \left\| \frac{ \rd \left[\mSigma^*(t)\right]^{-1} }{\rd t} -
   \frac{ \rd \left[\mSigma(t)\right]^{-1} }{\rd t} \right\| = O(\epsilon).
\end{align}
Furthermore, by the Lyapunov equation of $\mSigma(t)$ obtained in Lemma \ref{thm:gaussian.z}, we have that
\begin{align} \label{varphi-limit-cycle0}
    \frac{ \rd \left[\mSigma(t)\right]^{-1} }{\rd t} = \left[\mSigma(t)\right]^{-1}\frac{ \rd \left[\mSigma(t)\right] }{\rd t} \left[\mSigma(t)\right]^{-1}=  -\left[\mSigma\right]^{-1} \mA(\hat{\vx})  - \mA(\hat{\vx})^T  \left[\mSigma\right]^{-1} -  2 \left[\mSigma\right]^{-1} \mD \left[\mSigma\right]^{-1}.
\end{align}
By the assumption \eqref{assume-1.5}  that $\mA$ is continuous on $\mathcal{D}$, combined with Eq.  \eqref{def:fluctuation-on-cycle2}, Eq. \eqref{def:fluctuation-on-cycle3} and Eq. \eqref{varphi-limit-cycle0}, 
we can show that
\begin{align} \label{varphi-limit-cycle}
    \frac{ \rd \left[\mSigma^*(t)\right]^{-1} }{\rd t} = -\left[\mSigma^*\right]^{-1} \mA(\vx^*)  - \mA(\vx^*)^T  \left[\mSigma^*\right]^{-1} -  2 \left[\mSigma^*\right]^{-1} \mD \left[\mSigma^*\right]^{-1} + O(\epsilon),
\end{align} 
where we use the the triangular inequality.
Eq. \eqref{varphi-limit-cycle} is true  for any $\epsilon > 0$ and $t > T(\epsilon)$, and furthermore,  the functions $\left[\mSigma^*(t)\right]^{-1} $ and $\mA(\vx^*(t)) $ are periodic by the definitions, so if Eq. \eqref{varphi-limit-cycle} holds for $t > T(\epsilon)$, by the phase shift, it should hold for all $t >0$. Then
we can take $\epsilon \rightarrow 0$ to obtain
\begin{align} \label{varphi-limit-cycle1}
    \frac{ \rd \left[\mSigma^*(t)\right]^{-1} }{\rd t} = -\left[\mSigma^*\right]^{-1} \mA(\vx^*)  - \mA(\vx^*)^T  \left[\mSigma^*\right]^{-1} -  2 \left[\mSigma^*\right]^{-1} \mD \left[\mSigma^*\right]^{-1}, 
\end{align}
for all $t>0$ with the initial condition $\left[\mSigma^*(0)\right]^{-1} = \nabla\nabla\varphi(\vx^*(0))$.

Next, we need to investigate the solution $\left[\mSigma^*(t)\right]^{-1}$ given by the ODE \eqref{varphi-limit-cycle1}. Let us introduce a coordinate transformation from the Cartesian coordinate to the curvilinear coordinate around the limit cycle by the change of basis
\begin{align} \label{coordinate-transformation}
   \mathbf{K}(t) = \mathbf{Q}(t)^{-1} [ \mSigma^*(t)]^{-1} \mathbf{Q}(t) \quad \text{with} \quad  \mathbf{Q}(t) = \left[ \mathbf{e}_1(t) \ \mathbf{e}_2(t) \ \cdots \ \mathbf{e}_n(t) \right],
\end{align}
in which $\mathbf{Q}(t)$ is a time-dependent orthonormal basis. In particular, $\mathbf{e}_1(t) = \vb(\vx^*)/ \| \vb(\vx^* \|$ is the tangential unit vector on $\Gamma$ and the span of the rest of vectors $\{ \mathbf{e}_2(t) \ \cdots \ \mathbf{e}_n(t)  \}$ represents the hyperplane perpendicular to $\Gamma$. For every fixed time $t$, $\mathbf{Q}(t)$ can be obtain by the {\em Gram–Schmidt process} and $\mathbf{Q}(t)$ is known as the {\em Frenet frame}. By the proposition \eqref{prop-varphi-2}, since $\nabla\nabla \varphi(\vx^*)$ is always zero on the direction tangential to $\Gamma$, $\mathbf{e}_1(t)$ is in the nullspace of $\mathbf{Q}(t)$ for all $t$. With the fact $ \left[\mSigma^*(t)\right]^{-1} := \nabla\nabla\varphi(\vx^*(t))$ is symmetric, for all $1\leq i \leq n$ and $1\leq j \leq n$, we have that 
\begin{align}
      \left[\mathbf{K}(t) \right]_{i1} =  \mathbf{e}_i(t)^T [ \mSigma^*(t)]^{-1} \mathbf{e}_1(t) \equiv 0  \quad \text{and } \quad   \left[\mathbf{K}(t) \right]_{1j} = \mathbf{e}_1(t)^T [ \mSigma^*(t)]^{-1}\mathbf{e}_j(t) \equiv 0,
\end{align}
thus the matrix $\mathbf{K}(t)$ has both zero first column and zero first row. Therefore, we can define a  submatrix of $\tilde{\mathbf{K}}(t)$ by deleting the first column and the first row of ${\mathbf{K}}(t)$, equipped with Eq. \eqref{varphi-limit-cycle}, we will obtain a $n-1$ by $n-1$ system of differential equations 
\begin{align} \label{varphi-limit-cycle2}
    \frac{ \rd \tilde{\mathbf{K}}(t)}{\rd t} &= -\tilde{\mathbf{K}}(t) \left[ \tilde{\mA}(\vx^*) - \tilde{\mathbf{S}}(t)\right] - \left[\tilde{\mA}(\vx^*) - \tilde{\mathbf{S}}(t) \right]^T  \tilde{\mathbf{K}}(t) -  2 \tilde{\mathbf{K}}(t)\tilde{\mD} \tilde{\mathbf{K}}(t),
\end{align}
in which the symbol $\sim $ on top of each matrix represents the  restriction of the original matrix  in the hyperplane perpendicular to $\Gamma$ and the additional term $\tilde{\mathbf{S}}(t)$, ${\mathbf{S}} = \mathbf{Q}^{-1}(t) \dot{\mathbf{Q}}(t) $, is from the coordinate transformation. We can identify that Eq. \eqref{varphi-limit-cycle2} is a {\em periodic Riccati differential equation}. Under the assumptions that $\Gamma$ is a stable limit cycle and $\tilde{\mD}$ is definite positive (the later follows from the definite positive $\mD$ in the setup \eqref{macro.scale5}), the solution of the periodic Riccati differential equation \eqref{varphi-limit-cycle2} has to be positive definite and periodic with the same period of the limit cycle. Mathematical analysis of this type of periodic Riccati differential equations  can be found in the  works \cite{bittanti1984stability,pastor1993differential,chen2000existence,zhou2011periodic} and a comprehensive numerical analysis with several examples was provided in the work \cite{lin2019quasi}. 

The proof for Eq. \eqref{dynamics-prefactor-limit-cycle} of $\ln\omega^*(t)$ follows the proof for Eq. \eqref{dynamics-cov-limit-cycle} of $\left[\mSigma^*(t)\right]^{-1} $. Apply the same asymptotic analysis (Eq. \eqref{asym-limit-cycle0} - Eq. \eqref{varphi-limit-cycle1}) to the dynamics of $\omega(\hat{\vx}(t), t)$ in Eq. \eqref{eq:rate-of-omega}, we can show that  $\ln\omega^*$ satisfies 
\begin{align} \label{omega-limit-cycle}
    \frac{\rd \ln\omega^*(t)}{\rd t} = \nabla \cdot \vb(\vx^*) - \mD \nabla\nabla\varphi(\vx^*) = -\nabla \cdot \vgamma(\vx^*),
\end{align}
for all $t > 0$ with the initial condition $ \ln\omega^*(0) = \ln \omega(\vx^*(0))$. 

\end{proof}

To the best of our knowledge, the asymptotic analysis from Eq. \eqref{asym-limit-cycle0} to Eq. \eqref{varphi-limit-cycle1} is the first work rigorously shows the Lyapunov differential equation \eqref{varphi-limit-cycle1} for the curvature of large deviation rate function near a deterministic trajectory from a transient state to an invariant set. Additionally, by analyzing the solution given by the Lyapunov differential equation  with the coordinate transformation, we confirm that the solution $\left[\mSigma^*(t)\right]^{-1}$ is consistent with the features of $\nabla \nabla \varphi(\vx)$ on $\Gamma$ in the proposition \eqref{prop-varphi-3} from the previous work. We will apply Theorem \ref{thm:dynamics-mSigma-prefactor} to further obtain three characterizations of stochastic limit cycles: (i) probability flux near the cycle (Sec. \ref{ch4-2}); (ii) two special features of the circular motion (Sec. \ref{ch4-3}); (iii) a local entropy balance equation on the cycle (Sec. \ref{ch4-4}).

\subsection{Probability flux of near a limit cycle} \label{ch4-2}

Recall that the vector field $\boldsymbol \gamma$ is defined by the orthogonal decomposition of the drift $\vb$ in Eq. \eqref{b.var.gamma.decomp} and Eq. \eqref{b.var.gamma.decomp2}, which characterizes the direction of circular motion of the deterministic dynamics $\hat{\vx}'=\vb(\hat{\vx})$. In addition to  the orthogonal decomposition of the drift $\vb$, $\vgamma$ can be derived from the probability flux $\mathbf{J}[p_\epsilon(\vx, t)]$ defined in the FPE \eqref{seq-PDE3}: 
\begin{align}
    \vgamma_\epsilon(\vx, t) := \frac{\mathbf{J}[p_\epsilon(\vx, t)]}{p_\epsilon(\vx,t)} = \vb(\vx) - \epsilon \mD \nabla \ln p_\epsilon(\vx, t), 
\end{align}
and take  $\epsilon \rightarrow 0$ before  $t \rightarrow \infty$, we have that 
\begin{align}
   \lim_{t \rightarrow \infty} \lim_{\epsilon \rightarrow 0} \vgamma_\epsilon(\vx, t) = \lim_{t \rightarrow \infty} \vgamma(\vx, t) = \vgamma(\vx),
\end{align}
in which $\vgamma(\vx, t)$ is the same one defined in the HJE \eqref{eq:rate-function}. For the stationary probability flux $\mathbf{J}[\pi_\epsilon(\vx)]$,
\begin{align} \label{def:gamma2}
   \vgamma_\epsilon(\vx) := \frac{\mathbf{J}[\pi_\epsilon(\vx)]}{\pi_\epsilon(\vx)} = \vb(\vx) - \epsilon\mD\nabla\ln\pi_\epsilon(\vx), 
\end{align}
which is corresponding to the reverse order of limits
\begin{align}
   \lim_{\epsilon \rightarrow 0} \lim_{t \rightarrow \infty} \vgamma_\epsilon(\vx, t) = \lim_{\epsilon \rightarrow 0}   \vgamma_\epsilon(\vx) = \vgamma(\vx).
\end{align}
Note that $\vgamma_\epsilon(\vx)$ has been recognized as Onsager's thermodynamics force \cite{qian2001mesoscopic}. Here we don't need to worry about the orders of limits if we just focus on the domain $\mathcal{D}$ based on our discussion in Sec. \ref{ch3-1}.

Recall that $\mA(\vx)$ is the Jacobian matrix of $\vb(\vx)$ and $\left[ \mSigma^*\right]^{-1} = \nabla\nabla\varphi(\vx^*(t))$. By the above setup, we have the following theorem for $\vgamma_\epsilon(\vx)$ and the stationary probability flux $\mathbf{J}[\pi_\epsilon(\vx)]$ near $\Gamma$.
\begin{theorem} \label{thm:prob-flux-limit-cycle}
For $ \vx^* \in \Gamma$ and $\|\vx - \vx^* \| = O(\sqrt{\epsilon})$, 
\begin{align} \label{prob-flux-limit-cycle}
    \vgamma_\epsilon(\vx) := \frac{\mathbf{J}[\pi_\epsilon(\vx)]}{\pi_\epsilon(\vx)} = \left[ \vb(\vx^*) + \left(\mA(\vx^*) + \mD\left[ \mSigma^*\right]^{-1}\right)(\vx - \vx^*) \right] + O(\epsilon),
\end{align}
where $ \mD\left[ \mSigma^*\right]^{-1}$ is the Frobenius product, and $\left[ \mSigma^*\right]^{-1}$ satisfies the equation 
\begin{align} \label{dynamics-cov-limit-cycle2}
    \frac{ \rd \left[ \mSigma^*(t)\right]^{-1}}{\rd t} = -\left[ \mSigma^*\right]^{-1} \mA(\vx^*)  - \mA(\vx^*)^T  \left[ \mSigma^*\right]^{-1}  -  2 \left[ \mSigma^*\right]^{-1} \mD \left[ \mSigma^*\right]^{-1}.
\end{align}

\end{theorem}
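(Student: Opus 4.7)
The strategy is to combine the stationary WKB ansatz \eqref{wkb:stationary} for $\pi_\epsilon$ with a Taylor expansion of $\vgamma$ around a cycle point $\vx^*\in\Gamma$, and then to invoke Theorem \ref{thm:dynamics-mSigma-prefactor} for the evolution of $\left[\mSigma^*\right]^{-1}$. First I would substitute $\ln\pi_\epsilon(\vx)=\ln\hat a(\epsilon)-\varphi(\vx)/\epsilon+\ln\omega(\vx)+O(\epsilon)$ into the definition \eqref{def:gamma2}, yielding
\begin{align*}
\vgamma_\epsilon(\vx)=\vb(\vx)+\mD\nabla\varphi(\vx)-\epsilon\,\mD\nabla\ln\omega(\vx)+O(\epsilon^2).
\end{align*}
By the orthogonal decomposition \eqref{b.var.gamma.decomp}, the first two terms are precisely $\vgamma(\vx)$, so on the $O(\sqrt\epsilon)$-tube around $\Gamma$ (where $\nabla\ln\omega$ is bounded by the smoothness assumption \eqref{assume-1}) this reduces to $\vgamma_\epsilon(\vx)=\vgamma(\vx)+O(\epsilon)$.

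Next I would Taylor-expand $\vgamma$ about $\vx^*$,
\begin{align*}
\vgamma(\vx)=\vgamma(\vx^*)+\nabla\vgamma(\vx^*)\,(\vx-\vx^*)+O\!\left(\|\vx-\vx^*\|^{2}\right),
\end{align*}
and note that $\|\vx-\vx^*\|^2=O(\epsilon)$ absorbs the remainder. The cycle properties stated in proposition \eqref{prop-varphi-2} give $\nabla\varphi(\vx^*)=0$, so that $\vgamma(\vx^*)=\vb(\vx^*)$. Differentiating $\vgamma(\vx)=\vb(\vx)+\mD\nabla\varphi(\vx)$ and evaluating at $\vx^*$ produces $\nabla\vgamma(\vx^*)=\mA(\vx^*)+\mD\,\nabla\nabla\varphi(\vx^*)=\mA(\vx^*)+\mD\left[\mSigma^*\right]^{-1}$, using the identification $\left[\mSigma^*(t)\right]^{-1}=\nabla\nabla\varphi(\vx^*(t))$ that underlies the whole section. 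Substituting back gives \eqref{prob-flux-limit-cycle}, and \eqref{dynamics-cov-limit-cycle2} is a direct restatement of Eq. \eqref{dynamics-cov-limit-cycle} in Theorem \ref{thm:dynamics-mSigma-prefactor}.

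The main technical obstacle is bookkeeping the two independent sources of small parameters: the WKB remainder $O(\epsilon)$ from truncating the asymptotic series and the Taylor remainder $O(\|\vx-\vx^*\|^{2})$ from the local expansion of $\vgamma$. The scaling hypothesis $\|\vx-\vx^*\|=O(\sqrt\epsilon)$ is precisely what makes the two orders consistent, and the uniform regularity of $\varphi$ and $\omega$ in an $O(\sqrt\epsilon)$-neighborhood of $\Gamma$ needed for the expansion is supplied by assumptions \eqref{assume-1}--\eqref{assume-1.5} together with compactness of $\mathcal{D}$, which upgrade the pointwise error estimates to uniform ones along the cycle.
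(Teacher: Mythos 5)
Your proposal is correct and follows essentially the same route as the paper: expand $\ln\pi_\epsilon$ via the stationary WKB ansatz, use $\varphi(\vx^*)=0$ and $\nabla\varphi(\vx^*)=0$ on $\Gamma$ to Taylor-expand to second order, identify $\nabla\nabla\varphi(\vx^*)=\left[\mSigma^*\right]^{-1}$, and absorb the quadratic remainder using $\|\vx-\vx^*\|=O(\sqrt{\epsilon})$. The only cosmetic difference is that you route the computation through $\vgamma(\vx)=\vb(\vx)+\mD\nabla\varphi(\vx)$ and expand that, whereas the paper expands $\nabla\ln\pi_\epsilon$ directly, but the two computations are identical term by term.
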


\begin{proof}
We first approximate $\nabla \ln \pi_\epsilon(\vx)$ near $\Gamma$, in which $\pi_\epsilon$ has the WKB expansion in Eq. \eqref{wkb:stationary}. Let $\|\vx - \vx^* \| = O(\sqrt{\epsilon})$, 
\begin{align} \label{approx-stationary-density}
    \nabla \ln \pi_\epsilon(\vx) &= \nabla \left( \ln \omega(\vx) - \frac{\varphi(\vx)}{\epsilon}\right) + O(\epsilon) \notag \\
    &= \nabla \left( \ln \omega(\vx) - \frac{(\vx - \vx^*)^T \nabla\nabla\varphi(\vx^*) (\vx - \vx^*)}{2\epsilon} \right) + O(1) \notag \\
    &=  - \frac{\nabla\nabla\varphi(\vx^*) (\vx - \vx^*)}{\epsilon} + O(1),
\end{align}
where we use $\varphi(\vx^*) \equiv 0$ and $\nabla \varphi(\vx^*) \equiv 0$. Apply the approximation \eqref{approx-stationary-density} to Eq. \eqref{def:gamma2}, we can further approximate $ \vgamma_\epsilon(\vx)$ around $\Gamma$
\begin{align} \label{prob.-velocity}
    \vgamma_\epsilon(\vx) := \frac{\mathbf{J}[\pi_\epsilon(\vx)]}{\pi_\epsilon(\vx)} &= \vb(\vx) - \epsilon\mD\nabla\ln\pi_\epsilon(\vx) \notag \\
    &= \vb(\vx^*) + \left(\mA(\vx^*) + \mD\left[ \mSigma^*\right]^{-1}\right)(\vx - \vx^*) + O(\epsilon).
\end{align}
The dynamics \eqref{dynamics-cov-limit-cycle2}  is from Eq. \eqref{dynamics-cov-limit-cycle} in Theorem \ref{thm:dynamics-mSigma-prefactor}.

\end{proof}

\begin{remark}
In the particular case of linear dynamics $\hat{\vx}'= \mA\hat{\vx}$ ($\mA$ is a constant $n \times n$ matrix) with a stable fixed point $\vx^*=0$, the stationary probability flux $\mathbf{J}$ has the formula \cite{qian2011nonlinear}
\begin{align} \label{prob-flux-fixed-point}
    \mathbf{J}[\pi_\epsilon(\vx)] = \pi^{-1}_\epsilon(\vx) \left[ \left(\mA +  \mD\mSigma^{-1} \right)\vx\right],
\end{align}
where $\mSigma^{-1}$ satisfies $\mSigma^{-1} \mA  +  \mA^T \mSigma^{-1} + 2\mSigma^{-1}\mD\mSigma^{-1} = 0 $. To compare Eq. \eqref{prob-flux-limit-cycle} with Eq. \eqref{prob-flux-fixed-point}, Theorem \ref{thm:prob-flux-limit-cycle} can be regarded as a local linear approximation of the stationary probability flux near each point $\vx^*$ on the limit cycle. This new result extends the case from a fix point to an invariant set. 
\end{remark}

\subsection{Two features of $\vgamma$ on the limit cycle} \label{ch4-3}

By Theorem \ref{thm:prob-flux-limit-cycle}, we have an approximate the probability flux near $\Gamma$, and furthermore, by the stationary FPE  \eqref{seq-PDE3}, there is another important property of the probability flux 
\begin{align} \label{div-free-flux}
     \nabla \cdot \mathbf{J}[ \pi_\epsilon(\mathbf{x}) ] = 0, \quad \text{for all} \ \mathbf{x} \in \mathbb{R}^n.
\end{align}
Since this property holds in the whole space, we can apply it to an arbitrary neighborhood of the limit cycle. Having this property, we will obtain two special features of $\vgamma$ on the limit cycle in this section.

The derivation of the system of equations \eqref{b.var.gamma.decomp} - \eqref{divofomega} in the work \cite{Qian-2017} was by plugging the WKB ansatz \eqref{wkb:stationary} into the stationary FPE and equating like-order terms of
\begin{align} \label{stationary.FPE}
    \nabla \cdot ( \pi_\epsilon(\mathbf{x}) \boldsymbol \gamma_\epsilon ( \mathbf{x}) ) = \nabla \cdot \mathbf{J}[ \pi_\epsilon(\mathbf{x}) ] = 0, \quad \text{for all} \ \mathbf{x} \in \mathbb{R}^n.
\end{align}
Applying Eq. \eqref{divofomega} in the system of equations to $\Gamma$, we obtain the first feature of $\vgamma$ on $\Gamma$
\begin{align} \label{div.free.gamma}
    \nabla \cdot (\omega(\mathbf{x}) \boldsymbol \gamma(\mathbf{x})  )  = - \nabla \varphi(\mathbf{x}) \cdot \mD\nabla \omega(\mathbf{x}) = 0  \quad \text{for all} \ \mathbf{x} \in \Gamma,
\end{align}
where we use $\nabla \varphi(\vx) \equiv 0$ for $\vx \in \Gamma$. We can recognize that the divergence-free stationary probability flux in $\mathbb{R}^n$ is the key to get Eq. \eqref{stationary.FPE} so that we can further obtain the divergence-free $\omega(\mathbf{x}) \boldsymbol \gamma(\mathbf{x}) $ on the limit cycle in Eq. \eqref{div.free.gamma}. With a similar approach, not only the divergence of $\omega(\mathbf{x}) \boldsymbol \gamma(\mathbf{x}) $, we can also obtain the second feature, $\lvert\lvert \omega(\mathbf{x}) \boldsymbol \gamma(\mathbf{x}) \rvert\rvert$, on the limit cycle via the following theorem: 
\begin{theorem} \label{thm:speed-omega} 
Let $1/v(\mathbf{x})$ be the product of the nonzero eigenvalues of the matrix $\nabla\nabla\varphi(\mathbf{x})$. Then
\begin{align}
    \sqrt{v(\mathbf{x})} \times \lvert\lvert \omega(\mathbf{x})   \vgamma(\mathbf{x}) \rvert\rvert
\end{align}
is constant on the limit cycle $\Gamma$. Furthermore, let $g_{\epsilon}(\mathbf{x})$ be the marginal density of $\pi_{\epsilon}(\vx)$ on the limit cycle $\Gamma$, then for $\mathbf{x} \in \Gamma$,
\begin{align} 
    g_\epsilon(\mathbf{x}) =  \frac{ \omega(\mathbf{x}) \sqrt{v(\mathbf{x})}}{ \int_{\Gamma}  \omega(\mathbf{y}) \sqrt{v(\mathbf{y})} d\mathbf{y}} + O(\epsilon),
\end{align}
and there exists a constant $K$ such that $g_\epsilon(\mathbf{x})\lvert\lvert \boldsymbol\gamma(\mathbf{x}) \rvert\rvert = K + O(\epsilon) $.
\end{theorem}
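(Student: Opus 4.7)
The plan is to combine Laplace's method applied to the WKB ansatz \eqref{wkb:stationary} with the divergence-free condition $\nabla\cdot\mathbf{J}[\pi_\epsilon]=0$ in a tubular neighborhood of $\Gamma$. I introduce local coordinates $(s,\vy)\in[0,L)\times\mathbb{R}^{n-1}$, where $s$ is arclength along $\Gamma$ and $\vy=(y_2,\ldots,y_n)$ parametrizes the transverse hyperplane spanned by $\{\mathbf{e}_2(s),\ldots,\mathbf{e}_n(s)\}$ of the Frenet frame built in the proof of Theorem \ref{thm:dynamics-mSigma-prefactor}. Because $\varphi(\vx^*(s))\equiv 0$ and $\nabla\varphi(\vx^*(s))\equiv 0$ on $\Gamma$, Taylor expansion gives $\varphi(s,\vy)=\tfrac{1}{2}\vy^T\tilde{\mathbf{K}}(s)\vy+O(|\vy|^3)$ with $\tilde{\mathbf{K}}(s)$ the transverse Hessian, guaranteed positive definite and periodic by Theorem \ref{thm:dynamics-mSigma-prefactor}. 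The product of nonzero eigenvalues of $\nabla\nabla\varphi(\vx^*(s))$ equals $\det\tilde{\mathbf{K}}(s)$, so $\sqrt{v(s)}=(\det\tilde{\mathbf{K}}(s))^{-1/2}$.

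For the marginal-density formula I integrate the stationary WKB expression $\pi_\epsilon=\hat{a}(\epsilon)\omega(\vx)e^{-\varphi(\vx)/\epsilon}(1+O(\epsilon))$ over the transverse directions. Lemma \ref{lemma-2}, applied with saddle at $\vy=0$ and transverse Jacobian $J(s,0)=1$, produces
\begin{equation*}
g_\epsilon(s)=\hat{a}(\epsilon)\,\omega(s)\sqrt{v(s)}\,(2\pi\epsilon)^{(n-1)/2}\bigl(1+O(\epsilon)\bigr).
\end{equation*}
Enforcing normalization $\int_0^L g_\epsilon(s)\,ds=1$ fixes $\hat{a}(\epsilon)(2\pi\epsilon)^{(n-1)/2}=\bigl(\int_\Gamma\omega\sqrt{v}\bigr)^{-1}(1+O(\epsilon))$, which yields the stated marginal-density formula.

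For the constancy feature, fix $s_1<s_2$ and apply the divergence theorem on the tubular slab $\{s_1\le s\le s_2\}$. The flux through the tube walls is exponentially small by the Gaussian concentration of $\pi_\epsilon$ at transverse scale $\sqrt{\epsilon}$ about $\Gamma$ (a consequence of the positivity of $\tilde{\mathbf{K}}$ in Theorem \ref{thm:dynamics-mSigma-prefactor}), so $\nabla\cdot\mathbf{J}[\pi_\epsilon]=0$ reduces to $\Phi(s_1)=\Phi(s_2)$ with
\begin{equation*}
\Phi(s)=\int \pi_\epsilon(s,\vy)\,\vgamma_\epsilon(s,\vy)\cdot\hat{\mathbf{t}}(s)\,J(s,\vy)\,d\vy .
\end{equation*}
Since $\vgamma_\epsilon=\vgamma+O(\epsilon)$ and $\vgamma(\vx^*(s))=\vb(\vx^*(s))=\|\vgamma(s)\|\hat{\mathbf{t}}(s)$ on $\Gamma$, Laplace's method again gives $\Phi(s)=\hat{a}(\epsilon)\,\|\vgamma(s)\|\omega(s)\sqrt{v(s)}\,(2\pi\epsilon)^{(n-1)/2}(1+O(\epsilon))$. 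Dividing by $\hat{a}(\epsilon)(2\pi\epsilon)^{(n-1)/2}$ and using constancy of $\Phi$ in $s$ yields $\|\vgamma(s)\|\omega(s)\sqrt{v(s)}=\|\vgamma(s')\|\omega(s')\sqrt{v(s')}+O(\epsilon)$. Because the left-hand side is $\epsilon$-independent, sending $\epsilon\to 0$ forces exact constancy, establishing the first feature. Combining with the marginal-density formula then gives $g_\epsilon(\vx)\|\vgamma(\vx)\|=K+O(\epsilon)$ with $K$ equal to this $\Gamma$-constant value of $\|\omega\vgamma\|\sqrt{v}$ divided by $\int_\Gamma\omega\sqrt{v}\,d\vy$.

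The main obstacle is making the Laplace-method expansion uniform in $s$; compactness of $\Gamma$ together with the smoothness assumptions on $\varphi$, $\omega$, and the Frenet frame inherited from Section \ref{ch3-1} reduces this to routine asymptotic analysis. A secondary subtlety is the estimate on the transverse flux through the tube walls, which is handled by choosing a tube of radius $\epsilon^{1/2-\delta}$ for some small $\delta>0$ so that both the Gaussian tail bound on $\pi_\epsilon$ and the local quadratic approximation to $\varphi$ remain simultaneously valid.
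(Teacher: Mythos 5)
Your proposal is correct and follows essentially the same route as the paper's own proof: both rest on applying the divergence theorem to $\nabla\cdot(\pi_\epsilon\vgamma_\epsilon)=0$ over a tubular segment of $\Gamma$, discarding the lateral-wall flux, and evaluating the cross-sectional integrals by Laplace's method on the WKB ansatz to extract the factor $\omega\sqrt{v}$, with the marginal density obtained from the same transverse Gaussian integration. Your use of arclength/Frenet coordinates and an $\epsilon^{1/2-\delta}$ tube radius are only cosmetic variations on the paper's fixed-$\delta$ tube with cross-sections perpendicular to $\vgamma$.
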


From the preceding discussion in Sec. \ref{ch3-1}, since $\varphi(\vx)$ is constant on $\Gamma$, the eigenvector of $\nabla\nabla\varphi(\mathbf{x})$ corresponding to the only one zero eigenvalue is tangential to $\Gamma$. Therefore, $v(\mathbf{x})$ in Theorem \ref{thm:speed-omega} defined on $\Gamma$ represents the scaled variance in the hyperplane perpendicular to $\Gamma$ (Recall that this hyperplane is defined by the span of the vectors $\mathbf{e}_2, \cdots, \mathbf{e}_n$ in the coordinate transformation \eqref{coordinate-transformation}). By Eq. \eqref{div.free.gamma}, we have that $\omega(\vx) \vgamma(\vx)$ is divergence-free on the limit cycle. By Theorem \ref{thm:speed-omega}, we further have that $\lvert\lvert\omega(\vx) \vgamma(\vx)\rvert\rvert$ is reciprocal to the scaled standard deviation perpendicular to the limit cycle. The later was mentioned in the previous work  \cite{vance1996fluctuations}. In the present work, we provide a mathematical proof in Appendix \ref{proof:constant-cycle}. The idea of proof is by using the Gauss's theorem for a tube around the limit cycle. Since the Gauss's theorem can only be applied to a small but finite tube, the divergence for the Gauss's theorem we use in the proof is $ \nabla \cdot ( \boldsymbol \gamma_\epsilon ( \mathbf{x}) \pi_\epsilon(\mathbf{x}))  = \nabla \cdot \mathbf{J}[ \pi_\epsilon(\mathbf{x}) ] = 0$, which holds for an arbitrary neighborhood of $\Gamma$.

\subsection{A local entropy balance equation on the limit cycle}

\label{ch4-4}

On the limit cycle, we have derived the local Gaussian fluctuations of dynamics  represented by the covariance $\mSigma^*(t)$ which follows the periodic Lyanupov equation  \eqref{dynamics-cov-limit-cycle} in Theorem \ref{thm:dynamics-mSigma-prefactor}. In general, the entropy of a Gaussian distribution $p$ with a covariance $ \mSigma$ is
\begin{align} \label{def:entropy}
    S = -\int p(\vx) \ln p(\vx) \rd \vx = \frac{1}{2} \ln\left[ 2\pi \text{det}(\mSigma) \right].
\end{align}
Therefore, in nonlinear stochastic systems, the ``local" entropy (denoted by $S_l$) due to the local Gaussian fluctuations has the rate defined by
\begin{align} \label{eq:entropy-rate}
    \frac{\rd S_l(t) }{ \rd t} := \frac{1}{2} \frac{\rd \ln \text{det}(\mSigma^*(t)) }{ \rd t }. 
\end{align}
By the property that the determinant of a matrix equals to the product of its eigenvalues, the local rate of entropy change \eqref{eq:entropy-rate} has an equivalent definition,
\begin{align} \label{eq:entropy-rate3}
    \frac{\rd S_l(t) }{ \rd t} := - \frac{1}{2} \frac{\rd\left( \sum_{k=1}^n\ln\lambda^*_k(t) \right)}{ \rd t }, 
\end{align}
where $\lambda^*_k(t), \ 1\leq k\leq n$ are the eigenvalues of $\left[\mSigma^*(t)\right]^{-1}$. Note that $1/v(\vx^*(t))$ defined in Theorem \ref{thm:speed-omega} equals to the product of all nonzero eigenvalues $\lambda^*_2(t) \cdots \lambda^*_n(t)$ of the matrix $\left[\mSigma^*(t)\right]^{-1}$.

By the above setup, we have the following theorem of a {\em local entropy balance equation} on the limit cycle $\Gamma$ with three equivalent expressions. 
\begin{theorem} \label{thm:entropy-change-limit-cycle}
For $ \vx^*(t) \in \Gamma$, by the definition \eqref{eq:entropy-rate} of the local rate of entropy change, there exists a local entropy balance equation with three equivalent expressions,
\begin{align}
    \frac{\rd S_l(t)}{ \rd t} &= \nabla \cdot \vgamma(\vx^*(t)), \label{rep-1}\\
    &= \displaystyle  -\frac{\rd\ln\omega(\mathbf{x^*}(t))}{\rd t},  \label{rep-2}\\
    &= \displaystyle  \frac{\rd \ln \lvert\lvert \boldsymbol\gamma(\vx^*(t)) \rvert\rvert }{ \rd t} + \frac{1}{2} \frac{\rd \ln v(\vx^*(t)) }{ \rd t}. \label{rep-3}
\end{align}
\end{theorem}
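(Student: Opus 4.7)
The plan is to derive expression \eqref{rep-1} directly from the definition of $\rd S_l/\rd t$ using Jacobi's formula, then obtain \eqref{rep-2} by comparison with the prefactor dynamics from Theorem \ref{thm:dynamics-mSigma-prefactor}, and finally deduce \eqref{rep-3} from the conservation law in Theorem \ref{thm:speed-omega}.

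For \eqref{rep-1}, I would first convert the Riccati equation for $[\mSigma^*]^{-1}$ in Theorem \ref{thm:dynamics-mSigma-prefactor} into a Lyapunov equation for $\mSigma^*$ itself. Differentiating the identity $\mSigma^*[\mSigma^*]^{-1} = \mI$ yields $\dot{\mSigma^*} = -\mSigma^* \bigl(\tfrac{\rd}{\rd t}[\mSigma^*]^{-1}\bigr)\mSigma^*$, and substituting Eq. \eqref{dynamics-cov-limit-cycle} gives the standard form $\dot{\mSigma^*} = \mA(\vx^*)\mSigma^* + \mSigma^*\mA(\vx^*)^T + 2\mD$. Then Jacobi's formula $\frac{\rd}{\rd t}\ln\det\mSigma^* = \mathrm{tr}([\mSigma^*]^{-1}\dot{\mSigma^*})$ combined with the cyclic property of the trace produces
\begin{equation*}
\tfrac{1}{2}\tfrac{\rd}{\rd t}\ln\det\mSigma^*(t) = \mathrm{tr}(\mA(\vx^*)) + \mathrm{tr}(\mD[\mSigma^*]^{-1}) = \nabla\cdot\vb(\vx^*) + \mD:[\mSigma^*]^{-1}.
\end{equation*}
Using the decomposition $\vgamma = \vb + \mD\nabla\varphi$ from \eqref{b.var.gamma.decomp} together with $\nabla\nabla\varphi(\vx^*) = [\mSigma^*]^{-1}$, the right-hand side equals $\nabla\cdot\vgamma(\vx^*(t))$, which is precisely \eqref{rep-1}.

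Expression \eqref{rep-2} is then immediate: Theorem \ref{thm:dynamics-mSigma-prefactor} already gives $\rd\ln\omega^*/\rd t = -\nabla\cdot\vgamma(\vx^*)$, so comparing with \eqref{rep-1} yields $\rd S_l/\rd t = -\rd\ln\omega(\vx^*(t))/\rd t$. For \eqref{rep-3}, I would invoke Theorem \ref{thm:speed-omega}, which asserts that $\sqrt{v(\vx)}\,\|\omega(\vx)\vgamma(\vx)\|$ is constant along $\Gamma$. Taking logarithms gives $\tfrac{1}{2}\ln v(\vx) + \ln\omega(\vx) + \ln\|\vgamma(\vx)\| = \mathrm{const}$ on $\Gamma$, and differentiating along the trajectory $\vx^*(t)$ produces
\begin{equation*}
\tfrac{1}{2}\tfrac{\rd}{\rd t}\ln v(\vx^*) + \tfrac{\rd}{\rd t}\ln\omega(\vx^*) + \tfrac{\rd}{\rd t}\ln\|\vgamma(\vx^*)\| = 0.
\end{equation*}
Substituting \eqref{rep-2} into this identity and rearranging gives \eqref{rep-3}.

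There is no genuine obstacle here; the argument is essentially an accounting exercise that leverages the results already established in Theorems \ref{thm:dynamics-mSigma-prefactor} and \ref{thm:speed-omega}. The only delicate point is carefully interpreting the Frobenius-product notation $\mD:[\mSigma^*]^{-1}$ as $\mathrm{tr}(\mD[\mSigma^*]^{-1})$ when applying Jacobi's formula, and ensuring that the symmetry of both $\mD$ and $[\mSigma^*]^{-1}$ is used to reduce $\mathrm{tr}(\mA(\vx^*)) + \mathrm{tr}(\mA(\vx^*)^T)$ to $2\,\nabla\cdot\vb(\vx^*)$. Once these identifications are made, the three expressions emerge as different ways of packaging the same underlying identity coming from the Lyapunov equation and the orthogonal decomposition $\vb = -\mD\nabla\varphi + \vgamma$.
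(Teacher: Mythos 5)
Your proposal is correct. For \eqref{rep-1} and \eqref{rep-2} you follow essentially the same path as the paper: the Jacobi-formula/trace computation applied to the (periodic Lyapunov/Riccati) dynamics of $\mSigma^*$ from Theorem \ref{thm:dynamics-mSigma-prefactor} gives $\nabla\cdot\vb(\vx^*)+\mD\,[\mSigma^*]^{-1}=\nabla\cdot\vgamma(\vx^*)$, and \eqref{rep-2} then follows at once from Eq.~\eqref{dynamics-prefactor-limit-cycle}. Where you genuinely diverge is \eqref{rep-3}. The paper does \emph{not} invoke Theorem \ref{thm:speed-omega}; it works directly with the eigenvalue form \eqref{eq:entropy-rate3} of the entropy rate, confronts the indeterminate term $\rd\ln\lambda_1^*/\rd t=\infty\times 0$ coming from the zero eigenvalue tangential to $\Gamma$, and resolves it via the identity $\vgamma^T[\mSigma^*]^{-1}\vgamma\equiv 0$, which yields $\rd\ln\lambda_1^*/\rd t=-2\,\rd\ln\|\vgamma\|/\rd t$ and hence \eqref{rep-3}. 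You instead take the constancy of $\sqrt{v}\,\|\omega\vgamma\|$ on $\Gamma$ from Theorem \ref{thm:speed-omega} as an input, take logarithms, differentiate along $\vx^*(t)$, and substitute \eqref{rep-2}. This is logically sound because Theorem \ref{thm:speed-omega} has its own independent proof (the Gauss-theorem argument on a tube around $\Gamma$ in Appendix B), so there is no circularity — but note that it exactly reverses the direction of the paper's closing remark, which uses the equivalence of \eqref{rep-2} and \eqref{rep-3} to give an \emph{alternative} proof of the constancy in Theorem \ref{thm:speed-omega}; that remark would become vacuous under your route. What each approach buys: yours is shorter and avoids the delicate manipulation of the degenerate eigenvalue, at the cost of importing the heavier global (Gauss's theorem) machinery; the paper's is purely local and self-contained, and delivers the constancy of $\sqrt{v}\,\|\omega\vgamma\|$ as a free corollary. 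One point you share with the paper and should flag: since $[\mSigma^*]^{-1}$ is singular on $\Gamma$, the matrix $\mSigma^*$ and $\det\mSigma^*$ in \eqref{eq:entropy-rate} are only formal, and the trace computation in your \eqref{rep-1} step is best carried out on $-\tfrac{1}{2}\,\rd\ln\det[\mSigma^*]^{-1}/\rd t$ restricted to the nondegenerate eigenvalues, which is precisely the bookkeeping the paper performs when it passes to \eqref{eq:entropy-rate3}.
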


\begin{proof}
By Eq. \eqref{eq:entropy-rate}, with the dynamics of $\left[\mSigma^*(t)\right]^{-1}$ \eqref{dynamics-cov-limit-cycle}, we can obtain  
\begin{align} \label{eq:entropy-rate2}
     \frac{\rd S_l(t) }{ \rd t} = \nabla \cdot \vb(\vx^*) + \mD \left[\mSigma^*\right]^{-1} = \nabla \cdot \vgamma(\vx^*),
\end{align}
where $\mD \left[\mSigma^*\right]^{-1}$ is the  Frobenius product of the matrix $\mD$ and the matrix $\left[\mSigma^*\right]^{-1}$. Furthermore, by Eq. \eqref{dynamics-prefactor-limit-cycle} for the prefactor $\omega$, we can link Eq. \eqref{eq:entropy-rate2} to the dynamics of $\omega$,
\begin{align}
    \frac{\rd S_l(t) }{ \rd t} = -\frac{\rd\ln\omega(\mathbf{x^*}(t))}{\rd t}.\label{limitcycle:conserved} 
\end{align}

So far, we have proved the first two expressions \eqref{rep-1} and \eqref{rep-2}. The following proof is for the third expression \eqref{rep-3}: By Theorem \ref{thm:dynamics-mSigma-prefactor}, we know the smallest eigenvalue $\lambda^*_1(t) \equiv 0$ with its eigenvector tangential to the limit cycle. Therefore, the first term on the right side of Eq. \eqref{eq:entropy-rate3} requires a further analysis since
\begin{align} \label{eq:unclear}
    \frac{ \rd \ln \lambda^*_1(t)}{\rd t} = \frac{1}{ \lambda^*_1(t)} \frac{ \rd \lambda^*_1(t)}{\rd t} = \infty \times 0. 
\end{align}
To find an explicit formula of \eqref{eq:unclear}, we can use 
\begin{align} \label{pre-dynamics3}
    \frac{ \rd \left( \vgamma^T(\vx^*(t)) \left[\mSigma^*(t)\right]^{-1} \vgamma(\vx^*(t)) \right)}{\rd t} = 0, \quad \text{for all} \ t>0, 
\end{align}
since $\left[\mSigma^*(t)\right]^{-1}\vgamma(\vx^*(t)) \equiv \mathbf{0}$ by Theorem \ref{thm:dynamics-mSigma-prefactor}. By rearranging Eq. \eqref{pre-dynamics3}, we find a formula of Eq. \eqref{eq:unclear},  
\begin{align} \label{eq:gamma-lambda1}
     \frac{ \rd \ln \lambda^*_1(t)}{\rd t}  =  - 2 \frac{\rd \ln \|\vgamma(\vx^*(t)) \|}{\rd t},
\end{align}
where we use that $\lambda^*_1(t)$ is the eigenvalue of $\left[\mSigma^*(t)\right]^{-1}$ with respect to the eigenvector $\vgamma(\vx^*(t))$. Therefore,  by Eq. \eqref{eq:entropy-rate3} and Eq. \eqref{eq:gamma-lambda1}, and with the definition of $1/v(\vx^*(t)) := \prod_{k=2}^n \lambda_k^*(t)$, the local rate of entropy change on $\Gamma$ has another expression
\begin{align} \label{eq:entropy-rate4}
    \frac{\rd S_l(t) }{ \rd t} = \frac{\rd \ln \lvert\lvert \boldsymbol\gamma(\vx^*(t)) \rvert\rvert }{ \rd t} + \frac{1}{2} \frac{\rd \ln v(\vx^*(t)) }{ \rd t}.
\end{align}

\end{proof}

Each expression has a clear physical meaning:
 \begin{enumerate}
    \item The first expression \eqref{rep-1}: 
    The divergence of a vector field characterizes the volume change of the flow following this vector field. Therefore, the local entropy change can be considered as a consequence of volume-expanding (entropy-increasing) or volume-contracting (entropy-decreasing) of the circular flow $\vx^*(t)' = \vgamma(\vx^*)$. This expression of the entropy balance is corresponding to the {\em microscopic entropy production rate} given by a large number particle system  without noise \cite{dobbertin1976functional,steeb1979generalized}. 
    \item The second expression \eqref{rep-2}: Let us compare the rate of free energy change \cite{ge2012landscapes}  (the free energy is defined by the relative entropy $F(t) = \int_{\mathbb{R}^n} p(\vx,t) \log \left(\frac{p(\vx,t)}{\pi(\vx) }\right) \rd \vx$) with the local rate of entropy change on the limit cycle $\Gamma$:
\begin{align}
    \frac{\rd F(t)}{ \rd t} &= \frac{\rd \varphi({{\mathbf{x}^*}(t)})}{\rd t} \equiv 0, \\
     \frac{\rd S_l(t)}{ \rd t} &= -\frac{\rd\ln\omega(\mathbf{x}^*(t))}{\rd t} \label{rate-of-entropy-omega}.
\end{align}
The former follows the change of the large-deviation rate function $\varphi(\vx)$ on the deterministic trajectory, which is always zero on the limit cycle due to constant $\varphi(\vx^*(t))$; The later follows the change of $-\ln \omega(\vx^*(t))$, where the prefactor $\omega(\vx)$ is known as ``degeneracy" in the classical statistical mechanical terminology \cite{Qian-2017}, which is not constant on the limit cycle in general. 
\item The third expression \eqref{rep-3}: The local entropy balance equation can be decomposed into two parts
\begin{align} \label{eq:fluctuation-dissipation}
    \frac{\rd S_l(t)}{\rd t} = \underbrace{\frac{\rd \ln \lvert\lvert \boldsymbol\gamma(\vx^*(t)) \rvert\rvert }{ \rd t}}_{\text{dissipative part}} + \underbrace{\frac{1}{2} \frac{\rd \ln v(\vx^*(t)) }{ \rd t}}_{\text{fluctuation part}}.
\end{align}
The first part is yielded by the change of speed on the limit cycle, which is determined from the deterministic path of the dissipative dynamics $\vx^*(t)' = \vgamma(\vx^*)$;  The second part is constituted by the change of the Gaussian fluctuations perpendicular to $\Gamma$. The {\em fluctuation-dissipation theory of nonequilibrium systems} by Keizer \cite{keizer2012statistical} elucidated the relation between the fluctuations of a time-inhomogeneous Gaussian process and the associated dissipative deterministic path. Following this theory, Eq. \eqref{eq:fluctuation-dissipation} can be regarded as a {\em fluctuation-dissipation decomposition} of the local entropy balance equation on the limit cycle.

\end{enumerate}

\begin{remark}
By integrating the second expression \eqref{rep-2}, when the system reaches its steady state, we have an equation of local entropy near the limit cycle,
\begin{align} \label{formula-entropy}
    S_l(t) = - \ln\omega(\mathbf{x}^*(t)) + C,
\end{align}
for some constant $C$. By the equation of entropy \eqref{formula-entropy}, we know that in the long run, the  entropy of system measured near the limit cycle in the scope of the CLT should be {\em periodic} with the same period of the cycle. On the other hand, the global entropy in the total system has to be constant in the long run because of the existence of stationary distribution. 
\end{remark}

\begin{remark}
By the equivalence of the expressions \eqref{rep-2} and \eqref{rep-3} in Theorem \ref{thm:entropy-change-limit-cycle}, we have an alternative proof for the constant
$
    \sqrt{v(\mathbf{x})} \times \lvert\lvert \omega(\mathbf{x})   \vgamma(\mathbf{x}) \rvert\rvert
$
 on the limit cycle $\Gamma$ in Theorem \ref{thm:speed-omega}. 
\end{remark}

\section{Related issue: the scaling hypothesis of diffusion processes} 

\label{ch4}

As the success of the well-established scaling hypothesis in the continuous-time non-negative integer valued Markov population process $\mathbf{n}_V(t)$ (it has a law of large number as the system's size $V\rightarrow \infty: V^{-1}\mathbf{n}_V(t) \rightarrow \mathbf{c}(t)$, the concentration of all the species \cite{kurtz1972relationship}), we shall justify the origin of $\epsilon$ in \eqref{macro.scale5} with physical interpretations more than just a mathematical tool.

Let us begin with a diffusion process $\mathbf{Y}(\tau) \in \mathbb{R}^n$ satisfied the following SDE
\begin{align} \label{meso.scale}
    \rd \mathbf{Y}(\tau) = \mathbf{g}(\mathbf{Y})\rd \tau + [2\mathbf{D}(\mathbf{Y})]^{\frac{1}{2}} \rd \mB(\tau),
\end{align}
where $\vg: \mathbb{R}^n \rightarrow \mathbb{R}^n$ stands for the drift of the process, $\mD: \mathbb{R}^n \rightarrow \mathbb{R}^n \times \mathbb{R}^n$ is the diffusion matrix, and $\mB(\tau)$ is the standard $n$-dimensional Brownian motion. Through choosing different scales, $\vX=\mathbf{Y}/\alpha$, $t=\tau/\beta$, the SDE \eqref{meso.scale} can be rescaled as
\begin{align} \label{macro.scale}
    \rd \mathbf{X}(t) = \frac{\beta}{\alpha}\mathbf{g}(\alpha \vX) \rd t + \frac{\sqrt{\beta}}{\alpha} [2\mathbf{D}(\alpha\vX)]^{\frac{1}{2}} \rd \mB(t).
\end{align}
We assume a {\em space-time structure} $\beta = \xi(\alpha)$ by a function $\xi: \mathbb{R} \rightarrow \mathbb{R}$, and define a small parameter $\epsilon$
\begin{align} \label{def:epsilon}
    \epsilon := \xi(\alpha) / \alpha^2
\end{align}
with an implicit solution $\alpha^*(\epsilon)$ of Eq. \eqref{def:epsilon}. Under this framework, the scaled SDE \eqref{macro.scale} becomes a sequence of SDEs parameterized by $\epsilon$
\begin{align} \label{macro.scale2}
    \rd \mathbf{X}(t) = {\vb}_\epsilon(\vX) \rd t + [2\epsilon\mathbf{D}_\epsilon(\vX)]^{\frac{1}{2}} \rd \mB(t),
\end{align}
where
\begin{align}
    \vb_\epsilon(\vx) := \frac{\xi(\alpha^*(\epsilon))}{\alpha^*(\epsilon)} \vg\left(\alpha^*(\epsilon)\vx\right) \quad \text{and} \quad \mD_\epsilon(\vx) := \mD\left(\alpha^*(\epsilon)\vx\right).
\end{align}
In order to observe an emergent phenomenon as $\epsilon \rightarrow 0$, we require certain conditions (i) $\lim_{\epsilon\rightarrow 0}{\vb}_\epsilon(\vx)=\vb(\vx)$ exists. (ii) $\lim_{\epsilon\rightarrow 0}\mD_\epsilon(\vx) = \mD(\vx)$  exists. (iii) The convergence of random processes solved of the SDEs \eqref{macro.scale2} in different modes exists \cite{freidlin1998random}. Then the limit gives us an emergent deterministic dynamics $\rd \vx(t)/ \rd t = \vb(\vx)$ as $\epsilon \rightarrow 0$. 

In connection to classical overdamped mechanical motions in a viscous fluid, Eq. \eqref{macro.scale2} is widely called a Langevin equation, and in this case the small parameter $\epsilon$ has been identified as related to temperature of the system as well as the “scale” under which the mechanical motion is being observed \cite{van1992stochastic}. In reality, the limit of $\epsilon$ being zero should be interpreted as particle motions in a ``continuous medium at finite temperature" rather than ``temperature asymptotic to zero". We follow this physical intuition and so called {\em scaling hypothesis} \cite{kardar2007statistical} for the origin of $\epsilon$. The following discussion offers an insight into the connection between our scaling hypothesis for diffusion processes and the scaling hypothesis for statistical physics of fields.

The space-time structure defined by the function $\xi$ is rather general. To illustrate our hypothesis, we focus on a specific space-time structure $\beta = \alpha^k$ and thus the small parameter is defined as $\epsilon := \alpha^{k-2}$. In this example, when $k<2$, deterministic dynamics emerges at the macroscopic scale ($\alpha \rightarrow \infty$); when $k>2$, deterministic dynamics emerges at the microscopic scale ($\alpha \rightarrow 0$). The choice of $k$ depends on the property of the underlying drift function $\vg$ and the diffusion function $\mD$. Given $\vg(\vx) = c \vx^n$, $c$ is a constant, and $\mD$ is a constant matrix, then the sequence of SDEs \eqref{macro.scale2} becomes
\begin{align} \label{eqn-homogeneity}
    \rd \vX(t) = \epsilon^{\frac{k-1+n}{k-1}} c \vX^n \rd t + [2\epsilon\mathbf{D}]^{\frac{1}{2}} \rd \mB(t).
\end{align}
In order to fulfill the conditions of convergence, in this example, the order of space-time structure $k$ must be determined by the order of the underlying drift function $n$, i.e., $k = 1-n$. Hence, the scaled drift function becomes $\epsilon$-independent while the diffusion term is asymptotic to zero as $\epsilon \rightarrow 0$, which gives rise to an emergent deterministic dynamics $\rd \vx(t) = c \vx^n \rd t$. In other words, as a scientific theory, when we are able to observe deterministic dynamics $\rd \vx(t) = c \vx^n \rd t$ in a ``macroscopic" experiment, with an underlying stochastic dynamics having the drift function $\vg(\vx) = c \vx^n$, $n > -1$, this experiment must be running by the right space-time structure $\beta = \alpha^{1-n}$.

In addition to the scale of space, by the space-time relation $\beta = \alpha^{1-n}$, for the macroscopic emergent deterministic dynamics ($\alpha \rightarrow \infty$), there is a corresponding scale of time for emergent laws: As $ -1 < n < 1$, the deterministic dynamics emerges in a long-time limit; As $ n > 1$, it emerges in a short-time limit. So emergent dynamics could be observed at different combinations of space-time scales, which are determined by $n$.  This scaling exponent is given by the drift function $\vg$ of the underlying diffusion process.  Therefore, in our scaling hypothesis for diffusion processes, experimental observation of a power law for the space-time structure is determined by the underlying physics. So we name it {\em scaling hypothesis}, which upholds the principle of scaling hypothesis for statistical physics of fields \cite{kardar2007statistical}. 

Here we want to introduce two types of celebrated theories which inspired our scaling hypothesis and point out what the new results we can provide beyond those theories:
\begin{enumerate}
    \item  As we mentioned in Sec. \ref{ch1-2}, the sequence of SDEs \eqref{macro.scale2} has been carefully studied in the text {\em random perturbations of dynamical systems} by Freidlin and Wentzell \cite{freidlin1998random}. Our scaling hypothesis gives $\epsilon$ a physical meaning which was unclear. 
    \item The Kurtz's first theorem \cite{kurtz1972relationship} showed that the ODE model is an emergent model under the infinite volume limit of the discrete Markov chain model; And the Kurtz's second theorem \cite{kurtz1981central} about the CLT for Markov chains is a generalization of a simple random walk for Donsker's invariance principle \cite{donsker1951invariance}. The main distinction between our scaling hypothesis and the scaling used in the Kurtz's theorems is that the former is for a sequence of {\em scaled stochastic differential equations} but the later is for a sequence of the sum over a {\em  scaled Markov chain}.
\end{enumerate}

\section{Discussions and applications}
\label{ch5}

In Sec. \ref{ch2}, we provided two preliminaries, Eq. \eqref{dyn-cov1} in Lemma   \ref{thm:gaussian.z} and Eq. \eqref{sigma-WKB} in Lemma \ref{thm:two-types}: The former gives us the dynamics of the covariance of a time-inhomogeneous Gaussian process and the later characterizes the local curvature of the time-dependent large deviation function near it infimum. They both have a nice property that the existence of stationary probability is not required, so it helps us understand transient behaviors of the systems whose stationary probability may not exist. For example, if a system has unstable macroscopic deterministic dynamics, we can still compute its transient local Gaussian fluctuations and curvature of the rate function near the deterministic trajectory.

In Sec. \ref{ch3}, by asymptotic analysis, we characterized the dynamics near a stable limit cycle, and we found that the prefactor $\omega$ in the WKB ansatz plays an important role, which can be seen in Theorem \ref{thm:speed-omega} and Theorem \ref{thm:entropy-change-limit-cycle}. In contradistinction to the well-established theories \cite{evans1985pde,feng2006large} of the HJE \eqref{eq:rate-function} for the large deviation rate function $\varphi$, to the best of our knowledge,  sophisticated mathematical analysis of the PDE \eqref{eq:rate-of-omega} for $\omega$ might be missing and it is worthy of attention in the future. For applications, the local entropy balance equation in Theorem \ref{thm:entropy-change-limit-cycle} can help us seek a better understanding of thermodynamic behaviors of stochastic biological oscillators, e.g.,  (i)  mammalian cell cycles under external noises \cite{li2014landscape}, (ii) a modified Morris–Lecar conductance-based model of a neuron driven by extrinsic noise \cite{bressloff2018variational}, and (iii) Rosenzweig-MacArthur model for predator-prey interactions with the effect of stochasticity \cite{mendler2018analysis}.

In Sec. \ref{ch4}, the scaling hypothesis as a scientific theory, it allows us to apply the treatment to mathematical models of the complex systems whose “noise” does not have a clear origin as classical overdamped mechanical motions in a viscous fluid described by the Langevin equation. For example, the hypothesis could be used to the models of mechanical motions in biology with noise due to coarse graining. In addition to the justification of small parameter $\epsilon$ itself, this hypothesis may give us a clarification of the origin of $\epsilon$-dependent drift function $\vb$. It is known that there are two types of integrals for SDEs:
\begin{align}
    \rd \vX(t) &= \vb_I(\vX)\rd t + [2\epsilon\mathbf{D}(\vX)]^{\frac{1}{2}} \rd \mB(t) \quad \text{(It\^o interpretation)},\\
    \rd \vX(t) &= \vb_S(\vX)\rd t + [2\epsilon\mathbf{D}(\vX)]^{\frac{1}{2}} \circ \rd \mB(t) \quad \text{(Stratonovich interpretation)}.
\end{align}
The former is commonly used in mathematical analysis and financial mathematics and the later is mostly applied in physics and engineering. Note that 
\begin{align} \label{I-S}
    \vb_I(\vx) = \vb_S(\vx) + \epsilon \nabla \cdot \mathbf{D}(\vx). 
\end{align}
Follow the scaling hypothesis, the existence of the extra $\epsilon$-order term in Eq. \eqref{I-S}  could be a corollary of the existence of higher-order terms in the underlying drift function before scaling. This hypothesis provides us a  link between the two types of integral. It might help us to unravel the mystery of It\^o - Stratonovich dilemma \cite{van1992stochastic,gardiner2009stochastic,ao2004potential} in the future.

\section{Appendix}

\subsection{Proofs of Lemma \ref{lemma-2} and Lemma \ref{lemma-3} }
\label{proof:lemmas}

\subsubsection{Proof of Lemma \ref{lemma-2} }
\begin{proof}
For the special case of $\vx\in\mathbb{R}$, $\Theta=3$, $\Lambda=15$, and $\mXi=[h''(\vx^*)]^{-1}$.  
Then Eq. (\ref{eq-13-c}) 
\[
   \eta(\vx^*) =  \frac{f''(\vx^*)}{2h''(\vx)}-\left[
     \frac{f'(\vx^*)h'''(\vx^*)}{2[h''(\vx^*)]^2}+
            \frac{f(\vx^*)h''''(\vx^*)}{8[h''(\vx^*)]^2}\right]
   + \frac{5f(\vx^*)[h'''(\vx^*)]^2}{24[h'''(\vx^*)]^3}.
\]
This result can be found on P. 273 of \cite{bender2013advanced}, 
Eq. (6.4.35).   

	For the general case, 
\begin{eqnarray}
	 && \int_{\mathbb{R}^n} f(\vx)e^{-\frac{h(\vx)}{\epsilon}}\rd\vx
\nonumber\\
   &=& \int_{\mathbb{R}^n} 
       \left[ f(\vx^*) + (\vx-\vx^*)\cdot\nabla f(\vx^*) 
         +\frac{(\vx-\vx^*)^T\nabla\nabla f(\vx^*)(\vx-\vx^*)}{2}
          + \dots \right] 
\nonumber\\
	&\times&  \exp\left[-\frac{h(\vx^*)}{\epsilon}-\frac{(\vx-\vx^*)_ih''_{ij}(\vx^*) (\vx-\vx^*)_j}{2\epsilon} -\frac{h'''_{ijk}(\vx^*)(\vx-\vx^*)_i 
               (\vx-\vx^*)_j(\vx-\vx^*)_k}{6\epsilon} \right.
\nonumber\\
	&-&  \left. \frac{h''''_{ijk\ell}(\vx^*)(\vx-\vx^*)_i
       (\vx-\vx^*)_j(\vx-\vx^*)_k(\vx-\vx^*)_{\ell}}{24\epsilon} + \cdots  
       \right]  \rd\vx
\nonumber\\
	&=&  e^{-\frac{h(\vx^*)}{\epsilon}}
     \int_{\mathbb{R}^N} 
       \left[ f(\vx^*) +(\vx-\vx^*)\cdot\nabla f(\vx^*) + 
             \frac{(\vx-\vx^*)^T\nabla\nabla f(\vx^*) (\vx-\vx^*)}{2}
          + \dots \right] 
\nonumber\\
	&\times&  \left[ 1 -\frac{h'''_{ijk}(\vx^*)}{6\epsilon}(\vx-\vx^*)_i 
                   (\vx-\vx^*)_j(\vx-\vx^*)_k
             - \frac{h''''_{ijk\ell}(\vx^*)}{24\epsilon}(\vx-\vx^*)_i 
               (\vx-\vx^*)_j \right.
\nonumber\\ 
  &\times& \left. (\vx-\vx^*)_k(\vx-\vx^*)_{\ell}
          + \frac{\big[h'''_{ijk}(\vx^*)(\vx-\vx^*)_i(\vx-\vx^*)_j(\vx-\vx^*)_k\big]^2}{72\epsilon^2}
    +\cdots  
       \right]
\nonumber\\
	&\times&  e^{-\frac{ (\vx-\vx^*)^T\nabla\nabla h(\vx^*) (\vx-\vx^*)}{2\epsilon}
             } \rd\vx
\nonumber\\
	&=&  e^{-\frac{h(\vx^*)}{\epsilon}}
     \int_{\mathbb{R}^n}  e^{-\frac{(\vx-\vx^*)^T\nabla\nabla h(\vx^*)(\vx-\vx^*) }{2\epsilon}
            }
       \left[ f(\vx^*) + (\vx-\vx^*)\cdot\nabla f(\vx^*)  \right.
\nonumber\\
	&+& \frac{ (\vx-\vx^*)^T\nabla\nabla f(\vx^*) (\vx-\vx^*)}{2}
	-  \frac{ f(\vx^*)h'''_{ijk}(\vx^*)}{6\epsilon}(\vx-\vx^*)_i
       (\vx-\vx^*)_j(\vx-\vx^*)_k
\nonumber\\
    &-& \left( \frac{f'_i(\vx^*)h'''_{jk\ell}(\vx^*)}{6\epsilon}  + 
            \frac{f(\vx^*)h''''_{ijk\ell}(\vx^*)}{24\epsilon}
              \right) (\vx-\vx^*)_i  (\vx-\vx^*)_j(\vx-\vx^*)_k(\vx-\vx^*)_{\ell}  
\nonumber\\
	&+&    \left. 
            \frac{f(\vx^*)[h'''_{ijk}(\vx^*)(\vx-\vx^*)_i(\vx-\vx^*)_j(\vx-\vx^*)_k ]^2}{72\epsilon^2} + \cdots      \right] \rd\vx
\label{eq-100}\\
	&=&  \sqrt{\frac{(2\pi\epsilon)^N}{\det\big[\nabla\nabla h(x^*)\big]}}  e^{-\frac{h(\vx^*)}{\epsilon}}
       \left\{ f(\vx^*)  + \frac{\epsilon f''_{ij}(\vx^*)}{2}\Xi_{ij}(\vx^*)  \right.
\nonumber\\
   &-&
     \epsilon\left[ \frac{f'_i(\vx^*)h'''_{jk\ell}(\vx^*)}{6}  + 
            \frac{f(\vx^*)h''''_{ijk\ell}(x^*)}{24}
              \right] \Xi^{\frac{1}{2}}_{i\mu} 
       \Xi^{\frac{1}{2}}_{j\nu}\Xi^{\frac{1}{2}}_{k\rho}
            \Xi^{\frac{1}{2}}_{\ell\kappa}\Theta_{\mu\nu\rho\kappa}  
\nonumber\\
	&+&    \left. \epsilon 
      \left( \frac{f(\vx^*)[h'''_{ijk}(\vx^*)]^2}{72} \right)
         \Xi^{-\frac{1}{2}}_{i\mu}
         \Xi^{-\frac{1}{2}}_{i\mu'}\Xi^{-\frac{1}{2}}_{j\nu}\Xi^{-\frac{1}{2}}_{j\nu'}\Xi^{-\frac{1}{2}}_{k\rho}\Xi^{-\frac{1}{2}}_{k\rho'} 
        \Lambda_{\mu\mu'\nu\nu'\rho\rho'} +  \cdots \right\}.
\nonumber
\end{eqnarray}

	A multivariate normal distribution with covariance matrix
$\mXi$, which is positive definite thus 
$\mXi=\mXi^{\frac{1}{2}}\mXi^{\frac{T}{2}}$ 
\cite{bleistein1986asymptotic}, has
\begin{eqnarray*}
    && \frac{1}{\big[ \big(2\pi\epsilon\big)^N \det\big(\mXi\big)\big]^{\frac{1}{2}} } \int_{\mathbb{R}^n}
     f''_{ij}({\bf 0})x_ix_j
  \exp\left[-\frac{1}{2\epsilon}\vx^T \mXi^{-1} \vx \right]\rd\vx
\\
   &=& \frac{\epsilon\Xi^{\frac{1}{2}}_{i\nu} \Xi^{\frac{1}{2}}_{j\mu}    
 f''_{ij}({\bf 0})}{ (2\pi)^{N/2} }
        \int_{\mathbb{R}^n}y_{\nu}y_{\mu}
  \exp\left[-\frac{\vy^T\vy}{2}  \right]\rd\vy 
      = \epsilon f''_{ij}({\bf 0})\Xi_{ij},
\end{eqnarray*}
the Frobenius product of the Hessian matrix and 
covariant matrix $\mXi$,
\begin{eqnarray*}
  && \Big[ \big(2\pi\epsilon\big)^n \det\big(\mXi\big)\Big]^{-\frac{1}{2}}
	 \int_{\mathbb{R}^n} 
     f'''_{ijk}({\bf 0})x_ix_jx_k
\exp\left[-\frac{1}{2\epsilon}\vx^T \mXi \vx \right] \rd\vx
  =  0,
\\
  && \Big[ \big(2\pi\epsilon\big)^n \det\big(\mXi\big)\Big]^{-\frac{1}{2}}
	 \int_{\mathbb{R}^n} 
     f''''_{ijk\ell}({\bf 0})x_ix_jx_kx_{\ell}
\exp\left[-\frac{1}{2\epsilon}\vx^T \mXi \vx \right] \rd\vx
\nonumber\\
   &=& \epsilon^2 f''''_{ijk\ell}({\bf 0})
     \Xi^{\frac{1}{2}}_{i\mu}\Xi^{\frac{1}{2}}_{j\nu}
    \Xi^{\frac{1}{2}}_{k\rho}\Xi^{\frac{1}{2}}_{\ell\kappa}
       \Theta_{\mu\nu\rho\kappa},
\\
 && \Big[ \big(2\pi\epsilon\big)^n \det\big(\mSigma\big)\Big]^{-\frac{1}{2}}
	 \int_{\mathbb{R}^n} 
     f'''_{ijk}({\bf 0})x_i^2x_j^2x_k^2
\exp\left[-\frac{1}{2\epsilon}\vx^T \mXi \vx \right] \rd\vx
\nonumber\\
  &=&  \epsilon^3\big(2\pi\big)^{-\frac{n}{2}}
     f'''_{ijk}({\bf 0}) \Xi^{-\frac{1}{2}}_{i\mu}
         \Xi^{-\frac{1}{2}}_{i\mu'}\Xi^{-\frac{1}{2}}_{j\nu}\Xi^{-\frac{1}{2}}_{j\nu'}\Xi^{-\frac{1}{2}}_{k\rho}\Xi^{-\frac{1}{2}}_{k\rho'}
 \int_{\mathbb{R}^n} 
y_{\mu}y_{\mu'}y_{\nu}y_{\nu'}y_{\rho}
y_{\rho'}
\exp\left[-\frac{\vy^T\vy}{2}\ \right] \rd\vx
\nonumber\\
   &=& \epsilon^3 f'''_{ijk}({\bf 0})
   \Xi^{-\frac{1}{2}}_{i\mu}
         \Xi^{-\frac{1}{2}}_{i\mu'}\Xi^{-\frac{1}{2}}_{j\nu}\Xi^{-\frac{1}{2}}_{j\nu'}\Xi^{-\frac{1}{2}}_{k\rho}\Xi^{-\frac{1}{2}}_{k\rho'}
       \Lambda_{\mu\mu'\nu\nu'\rho\rho'}.
\end{eqnarray*}

	Applying (\ref{eq-13-a}) to
both numerator and denominator of the lhs of (\ref{eq-13-b}),
\[
	 \frac{ 
     f(\vx^*)+\epsilon\left\{ \begin{array}{c} \frac{f''_{ij}(\vx^*)\Xi_{ij}}{2}-\left[
     \frac{f'_i(\vx^*)h'''_{jk\ell}(\vx^*)}{6}+
            \frac{f_{\alpha}(\vx^*)h''''_{ijk\ell}(\vx^*)}{24}\right]
    \Xi^{\frac{1}{2}}_{i\mu} 
       \Xi^{\frac{1}{2}}_{j\nu}\Xi^{\frac{1}{2}}_{k\rho}
            \Xi^{\frac{1}{2}}_{\ell\kappa}\Theta_{\mu\nu\rho\kappa} \\
           + \frac{f(\vx^*)[h'''_{ijk}(\vx^*)]^2}{72} \Xi^{-\frac{1}{2}}_{i\mu}
         \Xi^{-\frac{1}{2}}_{i\mu'}\Xi^{-\frac{1}{2}}_{j\nu}\Xi^{-\frac{1}{2}}_{j\nu'}\Xi^{-\frac{1}{2}}_{k\rho}\Xi^{-\frac{1}{2}}_{k\rho'} 
        \Lambda_{\mu\mu'\nu\nu'\rho\rho'} \end{array} \right\} 
         + O(\epsilon^2)}{ 1+\epsilon\left\{ \begin{array}{c} -\left[
            \frac{h''''_{ijk\ell}(\vx^*)}{24}\right]
    \Xi^{\frac{1}{2}}_{i\mu} 
       \Xi^{\frac{1}{2}}_{j\nu}\Xi^{\frac{1}{2}}_{k\rho}
            \Xi^{\frac{1}{2}}_{\ell\kappa}\Theta_{\mu\nu\rho\kappa} \\
           + \frac{[h'''_{ijk}(\vx^*)]^2}{72} \Xi^{-\frac{1}{2}}_{i\mu}
         \Xi^{-\frac{1}{2}}_{i\mu'}\Xi^{-\frac{1}{2}}_{j\nu}\Xi^{-\frac{1}{2}}_{j\nu'}\Xi^{-\frac{1}{2}}_{k\rho}\Xi^{-\frac{1}{2}}_{k\rho'} 
        \Lambda_{\mu\mu'\nu\nu'\rho\rho'} \end{array} \right\}
            + O(\epsilon^2)}
\]
\[
	=  f(\vx^*) + \epsilon \left[\frac{f''_{ij}(\vx^*)\Xi_{ij}}{2}-
     \frac{f'_i(\vx^*)h'''_{jk\ell}(\vx^*)\Xi^{\frac{1}{2}}_{i\mu} 
       \Xi^{\frac{1}{2}}_{j\nu}\Xi^{\frac{1}{2}}_{k\rho}
           \Xi^{\frac{1}{2}}_{\ell\kappa}\Theta_{\mu\nu\rho\kappa}}{6}  \right]  +
         O(\epsilon^2).
\]

\end{proof}

\subsubsection{Proof of Lemma \ref{lemma-3} }

We only provide the proof for the case $\vx \in \mathbb{R}^1$, which is denoted by $x$. For higher dimensions, results are the same by using the notations from Lemma \ref{lemma-2}.
\begin{proof}
Let the global minimum of 
$[h(x)-\epsilon\ln g(x)]$ be at $\tilde{x}^*=x^*+\Delta x(\epsilon)$.
Clearly, $\Delta x\to 0$ as $\epsilon\to 0$.  In fact,
\begin{eqnarray*}
   &&  \left[  h'(x)-\epsilon\left(\frac{g'(x)}{g(x)}\right) \right]_{x=x^*+\Delta x}
      = 0,
\\
	&&  h'(x^*)+h''(x^*)\Delta x 
   -\epsilon\left(\frac{g'(x^*)}{g(x^*)}\right)
      = 0,
\\
	 && \Delta x = \epsilon\left(\frac{g'(x^*)}{g(x^*)h''(x^*)}\right)
        +O\big(\epsilon^2\big).
\end{eqnarray*}
Now we apply the Eq. (\ref{eq-13-b}) in Lemma \ref{lemma-2}:
\begin{eqnarray}
    && \frac{\displaystyle \int_{-\infty}^{\infty}
                     f(x)g(x)e^{-\frac{h(x)}{\epsilon}}\rd x}{
            \displaystyle  \int_{-\infty}^{\infty}
  g(x) e^{-\frac{h(x)}{\epsilon}}\rd x }  = \frac{\displaystyle \int_{-\infty}^{\infty}
                    f(x)e^{-\frac{h(x)-\epsilon\ln g(x)}{\epsilon}}\rd x}{
            \displaystyle  \int_{-\infty}^{\infty}
   e^{-\frac{h(x)-\epsilon\ln g(x)}{\epsilon}}\rd x }
\nonumber\\
    &=& f( \tilde{x}^*) + \epsilon \left[\frac{f''(\tilde{x}^*)}{2h''(\tilde{x}^*)}-
     \frac{f'(\tilde{x}^*)h'''(\tilde{x}^*)}{2[h''(\tilde{x}^*)]^2}  \right]  +
         O(\epsilon^2),
\nonumber\\
  &=& f(x^*) +\epsilon \underbrace{ \left(\frac{ f'(x^*)g'(x^*)}{g(x^*)h''(x^*)} \right. }_{ \text{ $f'(x^*)\Delta x$ due to } g(x) } +  \underbrace{ \frac{f''(x^*)}{2h''(x^*)} }_{ \text{ due to } f(x) } -
     \underbrace{ \left. \frac{f'(x^*)h'''(x^*)}{2[h''(x^*)]^2} \right) 
    }_{ \text{ due to non-quadratic } h(x) }  +
         O(\epsilon^2).
\end{eqnarray}
For the terms on the order of $\epsilon$, replacing
$\tilde{x}^*$ by $x^*$ only affects the order
$\epsilon^2$ term.

\end{proof}

\subsubsection{Proof of Lemma \ref{lemma-4} }

Here we provide an additional Lemma, which is not used in the present work, but it is useful for related fields.   
\begin{lemma}
\begin{equation}
  \frac{\displaystyle \int_{-\infty}^{\infty}
                     f^2(x)g(x)e^{-\frac{h(x)}{\epsilon}}\rd x}{
            \displaystyle  \int_{-\infty}^{\infty}
  g(x) e^{-\frac{h(x)}{\epsilon}}\rd x } - \left[
    \frac{\displaystyle \int_{-\infty}^{\infty}
                     f(x)g(x)e^{-\frac{h(x)}{\epsilon}}\rd x}{
            \displaystyle  \int_{-\infty}^{\infty}
  g(x) e^{-\frac{h(x)}{\epsilon}}\rd x } \right]^2
  =  \epsilon\left(  \frac{f'^2(x^*)}{h''(x^*)}  \right) + O\big(\epsilon^2\big).
\end{equation}
\label{lemma-4}
\end{lemma}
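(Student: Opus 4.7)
The plan is to recognize the left-hand side as the variance of $f$ under the probability measure proportional to $g(x)e^{-h(x)/\epsilon}$, and then to obtain both moments directly by applying Lemma \ref{lemma-3}. For the first term in the difference, I would invoke Lemma \ref{lemma-3} with $f(x)$ replaced by $f^2(x)$, using
\begin{equation*}
(f^2)'(x^*) = 2f(x^*)f'(x^*), \qquad (f^2)''(x^*) = 2\bigl[f'(x^*)\bigr]^2 + 2f(x^*)f''(x^*).
\end{equation*}
For the second term, I would apply Lemma \ref{lemma-3} as stated and then square the expansion, using $[f(x^*) + \epsilon C + O(\epsilon^2)]^2 = f^2(x^*) + 2\epsilon f(x^*)C + O(\epsilon^2)$ with
\begin{equation*}
C = \frac{f'(x^*)g'(x^*)}{g(x^*)h''(x^*)} + \frac{f''(x^*)}{2h''(x^*)} - \frac{f'(x^*)h'''(x^*)}{2[h''(x^*)]^2}.
\end{equation*}

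The heart of the proof is then a term-by-term comparison at order $\epsilon$. The leading values $f^2(x^*)$ cancel identically. The $g'/g$ contribution from the first expansion is $\frac{2f(x^*)f'(x^*)g'(x^*)}{g(x^*)h''(x^*)}$, which exactly matches $2f(x^*)$ times the corresponding term in $C$; similarly, the $h'''$ contribution, carrying the factor $2f(x^*)f'(x^*)$ in the first expansion, is canceled by $2f(x^*)$ times the $h'''$ term in $C$. From the second-derivative contribution $\tfrac{(f^2)''(x^*)}{2h''(x^*)} = \tfrac{[f'(x^*)]^2 + f(x^*)f''(x^*)}{h''(x^*)}$, only the piece $\tfrac{f(x^*)f''(x^*)}{h''(x^*)}$ is matched by $2f(x^*)$ times the $f''$ term in $C$, so the surviving remainder is exactly $\epsilon\,\tfrac{[f'(x^*)]^2}{h''(x^*)}$, as claimed.

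There is no serious obstacle here beyond careful bookkeeping; the lemma is essentially a corollary of Lemma \ref{lemma-3}. The one point that requires mild care is ensuring that the $O(\epsilon^2)$ error terms in the two separate expansions combine consistently after subtraction — but since both numerators and the common denominator are evaluated via the same Laplace-type expansion, the errors remain of order $\epsilon^2$, which is absorbed into the stated remainder. The multidimensional analogue would follow by the same route, using the full tensorial form of Lemma \ref{lemma-3}, with the scalar $[f'(x^*)]^2/h''(x^*)$ replaced by $f'_i(x^*)f'_j(x^*)\Xi_{ij}$.
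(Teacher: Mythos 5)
Your proposal is correct and follows essentially the same route as the paper's own proof: apply Lemma \ref{lemma-3} once with $f^2$ in place of $f$ and once with $f$ itself, square the latter expansion, and subtract, whereupon the $g'/g$, $f''$, and $h'''$ contributions cancel and only $\epsilon\,[f'(x^*)]^2/h''(x^*)$ survives. The term-by-term bookkeeping you describe matches the paper's computation exactly.
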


\begin{proof}
\begin{eqnarray}
  && \frac{\displaystyle \int_{-\infty}^{\infty}
                     f^2(x)g(x)e^{-\frac{h(x)}{\epsilon}}\rd x}{
            \displaystyle  \int_{-\infty}^{\infty}
  g(x) e^{-\frac{h(x)}{\epsilon}}\rd x } - \left[
    \frac{\displaystyle \int_{-\infty}^{\infty}
                     f(x)g(x)e^{-\frac{h(x)}{\epsilon}}\rd x}{
            \displaystyle  \int_{-\infty}^{\infty}
  g(x) e^{-\frac{h(x)}{\epsilon}}\rd x } \right]^2
\nonumber\\
  &=&  \epsilon 
     \left(\frac{ 2f(x^*)f'(x^*)g'(x^*)}{g(x^*)h''(x^*)}+ 
         \frac{2f'^2(x^*)+2f(x^*)f''(x^*) }{2h''(x^*)}-\frac{2f(x^*)f'(x^*)h'''(x^*)}{2[h''(x^*)]^2}  \right)
\nonumber\\
	&-& 2\epsilon f(x^*)
     \left(\frac{ f'(x^*)g'(x^*)}{g(x^*)h''(x^*)}+ 
         \frac{ f''(x^*) }{2h''(x^*)}-\frac{f'(x^*)h'''(x^*)}{2[h''(x^*)]^2}  \right)
         + O\big(\epsilon^2\big)
\nonumber\\
	&=& \epsilon\left(  \frac{f'^2(x^*)}{h''(x^*)}  \right) + O\big(\epsilon^2\big).
\end{eqnarray}

\end{proof}

\subsection{Proof of Theorem \ref{thm:speed-omega} } \label{proof:constant-cycle}

\begin{proof}
Given $\mathbf{x_1} \in \Gamma$, let $\mathcal{P}_1$ be a plane containing $\mathbf{x_1}$ and perpendicular to the vector $\boldsymbol\vgamma(\mathbf{x_1})$. Given another point $\mathbf{x}_2 \in \Gamma$, let $\mathcal{P}_2$ be a plane perpendicular to the vector $\boldsymbol \vgamma(\mathbf{x}_2)$. Let  $\mathcal{S}_1 \subset \mathcal{P}_1$ and $\mathcal{S}_2 \subset \mathcal{P}_2$ be compact sets such that
\begin{align}
    \max_{\mathbf{x}, \mathbf{y} \in  \mathcal{S}_1} \lvert\lvert \mathbf{x} - \mathbf{y} \rvert \rvert = \max_{\mathbf{x}, \mathbf{y} \in  \mathcal{S}_2} \lvert\lvert \mathbf{x} - \mathbf{y} \rvert \rvert = \delta >0.
\end{align}
We then can define a tube $\Phi(\delta)$ with two side boundaries $\mathcal{S}_1$ and $\mathcal{S}_2$ and $\Gamma \subset \Phi(\delta)$. 

Recall that $\boldsymbol \gamma_\epsilon ( \mathbf{x}) = \pi_\epsilon(\mathbf{x})^{-1} \mathbf{J}[ \pi_\epsilon(\mathbf{x}) ] $. By the stationary Fokker-Planck equation, we have that
\begin{align}
    \nabla \cdot ( \boldsymbol \gamma_\epsilon ( \mathbf{x}) \pi_\epsilon(\mathbf{x})) = 0, \quad \text{for all} \ \mathbf{x} \in \mathbb{R}^n.
\end{align}
Furthermore, by the Gauss's theorem,
\begin{align} \label{gauss}
    \int_{\mathcal{S}} \big( \boldsymbol\gamma_\epsilon ( \mathbf{x}) \pi_\epsilon(\mathbf{x}) \cdot \mathbf{n} \big) \rd \mathcal{S}   =   \int_{\mathcal{V}}   \big( \nabla \cdot (\boldsymbol\gamma_\epsilon ( \mathbf{x}) \pi_\epsilon(\mathbf{x}) \big) \rd \mathcal{V} = 0,
\end{align}
where $\mathcal{S}$ in the surface of $\Phi(\delta)$, $\mathbf n$ is the outward normal vector to  $\mathcal{S}$, and $\mathcal{V}$ is the volume of $\Phi(\delta)$. 

For the left hand side of Eq. \eqref{gauss}, it can be written as a sum of three terms
\begin{align} 
    \int_{\mathcal{S}} \big( \boldsymbol \gamma_\epsilon ( \mathbf{x}) \pi_\epsilon(\mathbf{x}) \cdot \mathbf{n} \big) \rd \mathcal{S} 
    &=   \int_{\mathcal{S}_1} \big( \boldsymbol\gamma_\epsilon ( \mathbf{x}) \pi_\epsilon(\mathbf{x}) \cdot \mathbf{n} \big) \rd \mathcal{S}_1   + \int_{\mathcal{S}_2} \big( \boldsymbol\gamma_\epsilon ( \mathbf{x}) \pi_\epsilon(\mathbf{x}) \cdot \mathbf{n} \big) \rd \mathcal{S}_2  \notag \\
    &+ \int_{\mathcal{S}_3} \big( \boldsymbol \gamma_\epsilon ( \mathbf{x}) \pi_\epsilon(\mathbf{x}) \cdot \mathbf{n} \big) \rd \mathcal{S}_3 , \label{eq:surface2}
\end{align}
in which $\mathcal{S}_3$ is the lateral surface of $\Phi(\delta)$ and $\mathcal{S}_3 \cap \Gamma = \emptyset$. 

For every $\mathbf{y} \in \Gamma $, we denote $\mathcal{S}_\vy :=\Phi(\delta) \cap \mathcal{P}_\vy$, $\mathcal{P}_\vy$ is the plane perpendicular to $\vgamma(\vy)$. By Lemma \ref{lemma-2},  
\begin{align} \label{wkb-approx-cycle}
\frac{ \int_{\mathcal{S}_\mathbf{y}} f(\mathbf{x}) e^{ \frac{-\varphi (\mathbf{x})  }{\epsilon}}  \rd \mathcal{S}_\mathbf{y}}{\int_{\mathcal{S}_\mathbf{y}}  e^{ \frac{-\varphi (\mathbf{x})  }{\epsilon}}   \rd \mathcal{S}_\mathbf{y}}
= f(\mathbf{y}) + O(\epsilon),
\end{align}
for any continuous and bounded function $f: \mathbb{R}^n \rightarrow \mathbb{R}$. Furthermore, by the definition of function $v$ in Lemma \ref{thm:speed-omega}, we can approximate the ratio of two integrals 
\begin{align} \label{normal.factor.approx2}
    \frac{  \int_{\mathcal{S}_\mathbf{y}}  e^{ \frac{-\varphi (\mathbf{x})  }{\epsilon}}   \rd \mathcal{S}_\mathbf{y}   }{\int_{\mathcal{S}_1}  e^{ \frac{-\varphi (\mathbf{x})  }{\epsilon}}   \rd \mathcal{S}_1 } = \frac{\sqrt{2\pi \epsilon v(\mathbf{y})  } e^{ \frac{-\varphi (\vy)  }{\epsilon}} }{  \sqrt{2\pi \epsilon v(\vx)  } e^{ \frac{-\varphi (\vx_1)  }{\epsilon}} } + O(\epsilon)
    = \frac{ \sqrt{v(\mathbf{y})}  }{ \sqrt{v(\mathbf{x}_1)} } + O(\epsilon),
\end{align}
in which we use Laplace's method in the first equality and $\varphi \equiv 0$ on $\Gamma$   in the second equality. To choose $f(\mathbf{x}) = \omega({\mathbf{x}})$ for Eq.  \eqref{wkb-approx-cycle}, combined with the result of \eqref{normal.factor.approx2}, we can obtain
\begin{align}  \label{normal.factor.approx3}
    \frac{  \int_{\mathcal{S}_\mathbf{y}}  \omega(\mathbf{x}) e^{ \frac{-\varphi (\mathbf{x})  }{\epsilon}}   \rd \mathcal{S}_\mathbf{y}   }{\int_{\mathcal{S}_1}  e^{ \frac{-\varphi (\mathbf{x})  }{\epsilon}}   \rd \mathcal{S}_1 } = \frac{ \omega(\mathbf{y})\sqrt{v(\mathbf{y})}  }{ \sqrt{v(\mathbf{x}_1)} } + O(\epsilon).
\end{align}
Since the function $e^{ -\frac{\varphi (\mathbf{x})  }{\epsilon}}$ is concentrated near $\Gamma$, we can further approximate the normalization factor $\int_{\mathbb{R}^n} \omega(\mathbf{x}) e^{ \frac{-\varphi (\mathbf{x})  }{\epsilon}}   \rd \mathbf{x}$ as follows
\begin{align} \label{normal.factor.approx}
\frac{\int_{\mathbb{R}^n} \omega(\mathbf{x}) e^{ \frac{-\varphi (\mathbf{x})  }{\epsilon}}   \rd \mathbf{x}}{\int_{\mathcal{S}_1}  e^{ \frac{-\varphi (\mathbf{x})  }{\epsilon}}   \rd \mathcal{S}_1 }  = 
\frac{\int_{\Gamma}  \int_{\mathcal{S}_\mathbf{y}} \omega(\mathbf{x}) e^{ \frac{-\varphi (\mathbf{x})  }{\epsilon}}   \rd \mathcal{S}_\mathbf{y}  \rd\mathbf{\mathbf{y}} }{\int_{\mathcal{S}_1}  e^{ \frac{-\varphi (\mathbf{x})  }{\epsilon}}   \rd \mathcal{S}_1 } + O(\epsilon).
\end{align}
By  \eqref{normal.factor.approx3} and \eqref{normal.factor.approx}, we have that
\begin{align} \label{normal.factor.approx4}
    \frac{\int_{\mathbb{R}^n} \omega(\mathbf{x}) e^{ \frac{-\varphi (\mathbf{x})  }{\epsilon}}   \rd \mathbf{x}}{\int_{\mathcal{S}_1}  e^{ \frac{-\varphi (\mathbf{x})  }{\epsilon}}   \rd \mathcal{S}_1 } = \frac{\int_{\Gamma}\omega(\mathbf{y}) \sqrt{v(\mathbf{y})}  \rd\mathbf{\mathbf{y}} }{\sqrt{v(\mathbf{x}_1)}}
 + O(\epsilon).
\end{align}
By the WKB expansion of $\pi_\epsilon$ in Eq. \eqref{wkb:stationary}, with Eq. \eqref{normal.factor.approx4}, the first term on the right hand side of Eq. \eqref{eq:surface2} can be written as
\begin{align} \label{wkb-approx-cycle2}
    \int_{\mathcal{S}_1} \big( \boldsymbol\gamma_\epsilon ( \mathbf{x}) \pi_\epsilon(\mathbf{x}) \cdot \mathbf{n} \big)  \rd \mathcal{S}_1 &= \frac{\int_{\mathcal{S}_1} \left( \boldsymbol\gamma_\epsilon( \mathbf{x})  \cdot \mathbf{n} \right)  \omega(\mathbf{x}) e^{ \frac{- \varphi (\mathbf{x})  }{\epsilon}}   \rd \mathcal{S}_1 }{  \int_{\mathbb{R}^n} \omega(\mathbf{x}) e^{ \frac{-\varphi (\mathbf{x})  }{\epsilon}}   \rd \mathbf{x} } \notag \\
    &= \left(\frac{\sqrt{v(\mathbf{x}_1)}}{ \int_{\Gamma} \omega(\mathbf{y}) \sqrt{v(\mathbf{y})}  \rd\mathbf{y}}  \right) \left( \frac{  \int_{\mathcal{S}_1} \left( \boldsymbol\gamma_\epsilon( \mathbf{x})  \cdot \mathbf{n} \right)   \omega(\mathbf{x}) e^{ \frac{-\varphi(\mathbf{x})  }{\epsilon}}    \rd \mathcal{S}_1  }{  \int_{\mathcal{S}_1}  e^{ \frac{-\varphi(\mathbf{x})  }{\epsilon}}   \rd \mathcal{S}_1   } \right) + O(\epsilon) .
\end{align}
Note that $\boldsymbol\gamma_\epsilon(\mathbf{x}) \rightarrow \boldsymbol\gamma(\mathbf{x})$. Without loss of generality, we assume that $\boldsymbol\gamma(\mathbf{x}_1)$ is inflow and $\boldsymbol\gamma(\mathbf{x}_2)$ is outflow of $\Phi(\delta)$. To choose $f(\mathbf{x}) = \left( \boldsymbol\gamma_\epsilon( \mathbf{x})  \cdot \mathbf{n} \right)   \omega(\mathbf{x})$ for Eq.  \eqref{wkb-approx-cycle}, combined with Eq. \eqref{wkb-approx-cycle2}, we then obtain  
\begin{align} \label{first.term}
     \int_{\mathcal{S}_1} \big( \boldsymbol\gamma_\epsilon ( \mathbf{x}) \pi_\epsilon(\mathbf{x}) \cdot \mathbf{n} \big)   \rd \mathcal{S}_1 \rightarrow - C   \sqrt{v(\mathbf{x}_1)} \omega(\mathbf{x}_1) \lvert\lvert  \boldsymbol\gamma(\mathbf{x}_1) \rvert \rvert \quad \text{as} \quad \epsilon \rightarrow 0, 
\end{align}
in which the constant $C = 1/ \int_{\Gamma} \omega(\mathbf{y}) \sqrt{v(\mathbf{y})}   \rd\mathbf{y} $. By the same approach, the second term on the right hand side of Eq. \eqref{eq:surface2} has a convergence
\begin{align} \label{second.term}
     \int_{\mathcal{S}_2} \big( \boldsymbol\gamma_\epsilon ( \mathbf{x}) \pi_\epsilon(\mathbf{x}) \cdot \mathbf{n} \big)  \rd \mathcal{S}_2 \rightarrow  C   \sqrt{v(\mathbf{x}_2)} \omega(\mathbf{x}_2) \lvert\lvert  \boldsymbol\gamma(\mathbf{x}_2) \rvert \rvert \quad \text{as} \quad \epsilon \rightarrow 0. 
\end{align}
Since $\mathcal{S}_3 \cap \Gamma = \emptyset$, the third term
\begin{align} \label{third.term}
     \int_{\mathcal{S}_3} \big( \boldsymbol\gamma_\epsilon ( \mathbf{x}) \pi_\epsilon(\mathbf{x}) \cdot \mathbf{n} \big)   \rd \mathcal{S}_3 \rightarrow 0 \quad \text{as} \quad \epsilon \rightarrow 0.
\end{align}
To apply the results \eqref{first.term}, \eqref{second.term}, and \eqref{third.term} to the equations \eqref{gauss} and \eqref{eq:surface2}, we can show that
\begin{align} \label{constant.circle}
    \Big| C  \sqrt{v(\mathbf{x}_1)} \omega(\mathbf{x}_1) \lvert\lvert  \boldsymbol\gamma(\mathbf{x}_1) \rvert\rvert -  C  \sqrt{v(\mathbf{x}_2)} \omega(\mathbf{x}_2) \lvert\lvert  \boldsymbol\gamma(\mathbf{x}_2)  \rvert\rvert    \Big| = 0.
\end{align}
Since Eq. \eqref{constant.circle} holds for every pair of two points on $\Gamma$, $ \sqrt{v(\mathbf{x})} \omega(\mathbf{x}) \lvert\lvert  \boldsymbol\gamma(\mathbf{x}) \rvert \rvert$ is constant on $\Gamma$. 

For $\vx \in \Gamma$, the marginal density can be approximated by
\begin{align} \label{eq:margial.approx2}
    g_\epsilon(\mathbf{x}) = \frac{ \int_{\mathbb{R}^n  \backslash \Gamma}  \omega(\mathbf{y}) e^{ \frac{-\varphi (\mathbf{y})  }{\epsilon}}  \rd \mathbf{y} }{ \int_{\mathbb{R}^n} \omega(\mathbf{y}) e^{ \frac{-\varphi (\mathbf{y})  }{\epsilon}}   \rd \mathbf{y} } = \frac{\int_{\mathcal{S}_\vx} \omega(\mathbf{y}) e^{ \frac{- \varphi (\mathbf{y})  }{\epsilon}} \rd \mathcal{S}_\vx }{ \int_{\mathbb{R}^n} \omega(\mathbf{y}) e^{ \frac{-\varphi (\mathbf{y})  }{\epsilon}}   \rd \mathbf{y} } + O(\epsilon) =   \frac{ \omega(\mathbf{x}) \sqrt{v(\mathbf{x})}}{ \int_{\Gamma}  \omega(\mathbf{y}) \sqrt{v(\mathbf{y})} \rd\mathbf{y}} + O(\epsilon), 
\end{align}
which follows the steps from  Eq.  \eqref{wkb-approx-cycle} to Eq. \eqref{wkb-approx-cycle2} with choosing $\boldsymbol\gamma_\epsilon( \mathbf{x})  \cdot \mathbf{n} = 1$. Furthermore, since $\sqrt{v(\vx)} \omega(\mathbf{x}) \lvert\lvert  \gamma(\mathbf{x}) \rvert \rvert$ is constant on $\Gamma$, with the result \eqref{eq:margial.approx2}, there exists a constant $K$ such that
\begin{align}
  g_\epsilon(\mathbf{x})\lvert\lvert \boldsymbol\gamma(\mathbf{x}) \rvert\rvert = K + O(\epsilon).
\end{align}

\end{proof}

\begin{acknowledgements}
We thank Lowell Thompson and Ying-Jen Yang for many helpful discussions. 
\end{acknowledgements}

%
\section*{Conflict of interest}
The authors declare that they have no conflict of interest.

\bibliographystyle{unsrt}
\bibliography{SLC.bib}

\end{document}